\begin{document}
\setlength{\textheight}{8.0truein}    

\runninghead{On quantum tensor product codes}
            {J-H Fan, Y-H Li, M-H Hsieh,	and H-W Chen}

\normalsize\textlineskip
\thispagestyle{empty}
\setcounter{page}{1105}

\copyrightheading{17}{13\&14}{2017}{1105--1122}

\vspace*{0.88truein}

\alphfootnote

\fpage{1105}

\centerline{\bf
ON QUANTUM TENSOR PRODUCT CODES}
\vspace*{0.37truein}
\centerline{\footnotesize
JIHAO FAN}
\vspace*{0.015truein}
\centerline{\footnotesize\it Department of Computer Engineering, Nanjing Institute of Technology, Nanjing, Jiangsu 211167, China, $\&$
}
\centerline{\footnotesize\it School of Computer Science and Engineering, Southeast University, Nanjing, Jiangsu 211189, China
}
\centerline{\footnotesize\it \url{jihao.fan@outlook.com}}
\vspace*{10pt}
\centerline{\footnotesize
YONGHUI LI}
\vspace*{0.015truein}
\centerline{\footnotesize\it School of Electrical and Information Engineering,
the University of Sydney, Sydney, NSW 2006, Australia
}
\centerline{\footnotesize\it \url{yonghui.li@sydney.edu.au}}
\vspace*{10pt}
\centerline{\footnotesize
MIN-HSIU HSIEH}
\vspace*{0.015truein}
\centerline{\footnotesize\it Center for Quantum Software and
Information, Faculty of Engineering and Information Technology,
}
\centerline{\footnotesize\it  University of Technology Sydney, NSW 2007, Australia}
\centerline{\footnotesize\it \url{min-hsiu.hsieh@uts.edu.au}}
\vspace*{10pt}
\centerline{\footnotesize
HANWU CHEN}
\vspace*{0.015truein}
\centerline{\footnotesize\it School of Computer Science and Engineering, Southeast University, Nanjing, Jiangsu 211189, China}
\centerline{\footnotesize\it \url{hw_chen@seu.edu.cn}}
\vspace*{0.225truein}
\publisher{August 2, 2016}{August 21, 2017}

\vspace*{0.21truein}

\abstracts{
We present a general  framework for the construction of quantum tensor product codes (QTPC).
 In a classical tensor product code (TPC),  its parity check matrix is constructed via   the tensor  product of   parity check
matrices of the two component codes.
We show that by adding  some constraints  on  the component codes,    several classes  of dual-containing TPCs can be obtained.
By selecting different types of component codes, the proposed method enables the construction of a large family of QTPCs and they can provide a
wide variety of quantum error control abilities. In particular, if one of the component codes is selected as a burst-error-correction code, then   QTPCs  have quantum multiple-burst-error-correction abilities, provided these bursts fall in distinct subblocks. Compared with  concatenated quantum codes (CQC),   the component code selections of QTPCs  are much more  flexible than those of CQCs  since only one of the component codes of QTPCs needs to satisfy the dual-containing restriction.  We show that it is possible to construct QTPCs with  parameters better than other classes of quantum error-correction codes (QECC), e.g., CQCs and quantum BCH codes.  Many QTPCs  are obtained with parameters better than previously known quantum codes available   in the literature. Several classes of QTPCs that can correct multiple quantum bursts of errors are constructed based on reversible cyclic codes and maximum-distance-separable (MDS) codes.
}{}{}

\vspace*{10pt}

\keywords{Quantum error-correction codes, Tensor product codes, Burst-error-correction codes, Concatenated quantum codes (CQC),
Maximal-distance-separable (MDS) codes, BCH codes}
\vspace*{3pt}
\communicate{R~Jozsa~\&~R~Laflamme}

\vspace*{1pt}\textlineskip    

\section{Introduction}
\newtheorem{examples}{Example}
Quantum information is sensitive and vulnerable to quantum noise  during the process of quantum computations and
quantum communications. By employing redundancy, quantum error-correction codes (QECC)   can  provide the effective protection of  quantum information against  errors caused by  decoherence and other quantum noise. As shown in  the pioneering work in \cite{steane1996error,shor1995scheme}, it is possible to construct QECCs from classical  error-correction codes (ECC) that
subject to certain constraints. Furthermore, stabilizer codes  \cite{gottesman1997stabilizer,ketkar2006nonbinary,calderbank1998quantum,brun2006correcting,hsieh2007general,hsieh2011high} provide  a more general framework to construct QECCs analogous to classical additive codes.   

Classical tensor product  codes were first proposed by Wolf in the 1960's  \cite{wolf1965codes,wolf2006introduction} and
were later generalized in \cite{imai1981generalized}. The parity check matrix of a TPC is obtained by taking the tensor product of   parity check
matrices of the two component codes. Based on the choice of the   component codes,     TPCs can be designed to provide error-correction, error-detection or error-location
properties. Recently, several classes of TPCs  have been considered to be used in data storage systems, e.g., in magnetic recording   \cite{alhussien2010iteratively, chaichanavong2006tensor, chaichanavong2006tensor1},  in Flash memory     \cite{gabrys2013graded,kaynak2014classification} and in the construction of   locally
repairable codes  which are applied in  distributed storage systems \cite{huang2015linear,huang2015binary}.
 In \cite{alhussien2010iteratively},  an iteratively decodable  TPC by concatenating  an error-pattern correction code with a $q$-ary LDPC code was proposed.
  The tensor product  concatenating  scheme could  significantly  improve the efficiency  of the inner parity code while retaining a similar performance.
In  \cite{maucher2000equivalence,bossert1999some,Helmut2003,fan2017comments}, generalized TPCs (also called  generalized error location codes) were shown to be  equivalent to generalized concatenated codes.

In \cite{grassl2005quantum},
quantum block and convolutional codes based on self-orthogonal TPCs with component codes over the same field were constructed.  In \cite{la2012asymmetric}, asymmetric quantum product codes were constructed from the tensor product of two Reed-Solomon (RS) codes. In \cite{fan2017construction}, quantum error-locating codes which can indicate the location of quantum
errors in a single sub-block were constructed. We should emphasize that although quantum tensor product codes (QTPC) were first proposed by Grassl and R{\"o}tteler  in \cite{grassl2005quantum}, they had  only considered  the construction of QTPCs from classical TPCs with component codes over the same field.  The design of QTPCs from TPCs with one of the component codes over the extension field has not been considered.  However, TPCs with one of the component codes over the extension field are more important and have more practical applications, see \cite{wolf1965codes,wolf2006introduction,alhussien2010iteratively,chaichanavong2006tensor, chaichanavong2006tensor1,gabrys2013graded,kaynak2014classification,huang2015linear,huang2015binary}.

 In this paper,  we propose a generalized construction of a series of QTPCs based on classical TPCs with one of the component codes over the extension field. The proposed QTPCs   will  exhibit   quantum  error-correction, quantum error-detection or quantum error-location
properties as their classical counterparts.
We show that as long as  one of the component codes of  classical TPCs satisfies certain dual-containing condition,
the resultant TPC will satisfy the  dual-containing condition for constructing QECCs. As a result, the choice of the other
 component code can be selected  without the dual-containing restriction.
 Several classes of QTPCs with different error control abilities are obtained.
 Compared with other quantum concatenation schemes, such as concatenated quantum codes (CQC) \cite{knill1996concatenated,grassl2009generalized},
 the component code  selections  of QTPCs  are   more flexible than those of CQCs. As we know, CQCs cannot be constructed from classical concatenated codes directly. A CQC is usually constructed from two component QECCs: an outer QECC and an inner QECC \cite{knill1996concatenated,grassl2009generalized}, e.g., a quantum RS code as the outer code and a binary QECC as the inner code.
Therefore, both the outer and inner component codes need to be QECCs. By contrast,
 only one of the component codes of QTPCs needs to satisfy the dual-containing restriction.
Furthermore,  if   letting  the minimum distance be a comparable length, QTPCs can have dimensions much larger than the dimensions of CQCs with the same length and minimum distance. Two families of QTPCs with  better parameters  than CQCs are constructed.
Moreover, several families of QTPCs are obtained with parameters better than quantum BCH codes in \cite{aly2007quantum,ling2010generalization}  or QECCs with minimum distance five and six in \cite{li2008binary}.
 It is known that classical TPCs cannot have parameters better than classical BCH codes in classical coding theory. However, QTPCs can  have parameters better than quantum BCH codes in the quantum case.
 More recently,  a comprehensive survey in \cite{caruso2014quantum} discussed the memory effects in quantum channels and it was shown that these effects can be accurately described by correlated error models, for which quantum-burst-error-correction codes (QBECC) (see \cite{vatan1999spatially,kawabata2000quantum}) should be designed to cope with these correlated errors. However, the construction of QBECCs with single or multiple-burst-error-correction abilities has received less attention. We show that
QTPCs have quantum multiple-burst-error-correction abilities as
their classical counterparts if one of the component codes is chosen as a burst-error-correction code,
provided these bursts fall in distinct
subblocks.

The paper is organized as follows.  Section \ref{Preliminaries} gives a brief review  of QECCs and classical TPCs.  Section \ref{quantum_TPC}  proposes   a general  framework for the construction of QTPCs by investigating the dual-containing properties of classical TPCs. Then in Section \ref{Code Comparisons}, several families of QTPCs are constructed and are compared with other classes of QECCs.   Section \ref{burst QTPCs} gives a special class of QTPCs with multiple-burst-error-correction abilities. The decoding of QTPCs is given in Section \ref{QTPCs Decoding}. Conclusions and discussions are shown in Section \ref{Conclusions}.

 \section{Preliminaries}
\label{Preliminaries}
\noindent
In this section we first review  some basic definitions and facts of QECCs, followed by the introduction of classical TPCs. We only consider the binary QECCs, i.e., the qubit  systems.

Let $q=2$ (or $q=4$). Let $GF(q)$ denote the finite field with $q$ elements.  The trace mapping $\textrm{Tr}:GF(4)\rightarrow GF(2)$ is given by $\textrm{Tr}(\alpha)=\alpha+\alpha^2$. Denote by $GF(q^m)$ a field extension of degree $m$ of the
field $GF(q)$. Let $C$ be a   linear code over $GF(q)$, the dual code of   $C$ is denoted by
 \begin{equation}
 C^\bot=\{c\in GF(q)^n|x\cdot c=0,\ \forall x\in C\}.
 \end{equation}
 For two vectors $u=(u_1,u_2,\ldots,u_n)$, $v=(v_1,v_2,\ldots,v_n)\in GF(4)^n$,  the Hermitian inner product of vectors $u$ and $v$ is denoted by
 \begin{equation}
 (u,v)_{h}= u\cdot  \overline{v} =\sum_{i=1}^nu_i v_i^2,
 \end{equation}
 and the trace-Hermitian inner product of  $u$ and $v$ is denoted by
 \begin{equation}
 (u,v)_{th}=\textrm{Tr}(u\cdot  \overline{v})=\sum_{i=1}^n(u_i v_i^2+u_i^2v_i),
 \end{equation}
  where $\overline{v}=(v_1^2,v_2^2,\ldots,v_n^2)$ denotes the conjugate of   vector $v=(v_1,v_2,\ldots,v_n)$. Let $D$  be a classical additive code  over $GF(4)$,   the  Hermitian dual code of   $D$ is denoted by
  \begin{equation}
 D^{\bot_{h}}=\{v\in GF(4)^n|(u,v)_{h}=0,\forall u\in D\},
 \end{equation}
 and the trace-Hermitian dual code of $D$ is denoted by
  \begin{equation}
 D^{\bot_{th}}=\{y\in GF(4)^n|(x,y)_{th}=0,\forall x\in D\}.
 \end{equation}

\subsection{\textbf{Quantum Error-Correction Codes}}
\noindent
Let $\mathbb{C}$ denote the complex number field. For a positive integer $n$,
let $V_n=(\mathbb{C}^2)^{\otimes n}=\mathbb{C}^{2^n}$ be the $n$th tensor
product of $\mathbb{C}^2$ representing the quantum Hilbert space over $n$ qubits. We denote by $\{|x\rangle|x\in GF(2)\}$ the vectors of  an orthonormal basis of $\mathbb{C}^2$.
Let $a,b  \in GF(2) $. The unitary operators $X(a)$ and $Z(b)$ are defined by
$X(a)|x\rangle=|x+a\rangle$ and $Z(b)|x\rangle=(-1)^{b\cdot x}|x\rangle$, respectively. Let $\mathbf{a}=\{a_1,\ldots,a_n\}$ and $\mathbf{b}=\{b_1,\ldots,b_n\}$ be two vectors over $ GF(2)$. Denote by $X(\mathbf{a})= X(a_1)\otimes \cdots \otimes X(a_n)$ and $Z(\mathbf{b})= Z(b_1)\otimes \cdots \otimes Z(b_n)$ the tensor products of $n$ error operators. Then the set $E_n=\{X(\mathbf{a})Z(\mathbf{b})|\mathbf{a},\mathbf{b}\in GF(2)^n\}$ is an
  error basis on  the quantum Hilbert space $V_n$. The finite group $G_n=\{\pm X(\mathbf{a})Z(\mathbf{b})|\mathbf{a},\mathbf{b}\in GF(2)^n \}$ is the error group  associated with the   error basis $E_n$.

\begin{definition}
\label{definition of QECC}
A stabilizer   code $Q=((n,2^k))$   is a $2^k$-dimensional  subspace   of $V_n=\mathbb{C}^{2^n}$ that satisfies $Q=\bigcap\limits_{e\in S}\{v\in\mathbb{C}^{2^n}|ev=v\}$, for some commutative subgroup $S$ of $G_n$. If $Q$ has minimum distance $d$, then it is denoted by $Q=((n,2^k,d))$ $($or $Q=[[n,k,d]])$.
\end{definition}

According to \cite{calderbank1998quantum,nebe2006self}, each binary stabilizer code $Q$ (also called additive quantum code)  corresponds to a classical additive code $D$ that is self-orthogonal with respect to the trace-Hermitian inner product over $GF(4)$.

\begin{theorem}[{\cite{calderbank1998quantum,nebe2006self}}]
 \label{additive_quantum_codes}
 An $((n,K,d))$ additive quantum code exists if and only if there exists an additive code $D$ over $GF(4)$ of cardinality $|D|=2^n/K$ such that $D\subseteq D^{\bot_{th}}$, and
 $d=wt(D^{\bot_{th}}\backslash D)$ if $K>1$ $($and $d=wt(D^{\bot_{th}})$ if $K=1)$, where $wt(A)=\min\{wt(a)|a\in A\}$,  $\forall A\subseteq GF(4)^n$.
\end{theorem}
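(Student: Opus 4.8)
The plan is to encode the Pauli error group inside $GF(4)^n$ so that the group-theoretic description of a stabilizer code (Definition~\ref{definition of QECC}) becomes the combinatorial description of an additive code, and then to extract the minimum distance from the Knill--Laflamme error-detection condition. Throughout, the error group is taken to be the Pauli group $\mathcal{G}_n=\langle E_n, iI\rangle$ generated by $E_n$ together with the scalar $iI$, and a stabilizer group is a commutative subgroup $S\le\mathcal{G}_n$ with $-I\notin S$.

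Fix a primitive element $\omega\in GF(4)$, so that $\omega^2=\overline{\omega}=\omega+1$ and $\omega\overline{\omega}=1$, and define
\[
\varphi:\ i^{\lambda}X(\mathbf{a})Z(\mathbf{b})\ \longmapsto\ \omega\mathbf{a}+\overline{\omega}\mathbf{b}\ \in\ GF(4)^n .
\]
The first step is the routine verification that $\varphi$ is a surjective homomorphism of $\mathcal{G}_n$ onto the additive group of $GF(4)^n$ with kernel the scalars $\{\pm I,\pm iI\}$, that $wt(\varphi(E))$ equals the number of qubits on which $E$ acts nontrivially, and that the commutator is governed by the trace-Hermitian form: combining $X(a)Z(b)=(-1)^{ab}Z(b)X(a)$ componentwise with the identity $(\omega\mathbf{a}+\overline{\omega}\mathbf{b},\,\omega\mathbf{a}'+\overline{\omega}\mathbf{b}')_{th}=\mathbf{a}\cdot\mathbf{b}'+\mathbf{a}'\cdot\mathbf{b}$ yields $E\,E'=(-1)^{(\varphi(E),\varphi(E'))_{th}}\,E'\,E$ for all $E,E'\in\mathcal{G}_n$. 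In particular, for a subgroup $S$ the image $\varphi(S)$ is an additive code, $S$ is commutative iff $\varphi(S)\subseteq\varphi(S)^{\bot_{th}}$, and $\varphi$ maps the centralizer $C_{\mathcal{G}_n}(S)$ onto $\varphi(S)^{\bot_{th}}$.

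For the correspondence itself: given $Q$ with stabilizer $S$ as in the definition, one has $-I\notin S$ (else $Q=0$), so $\varphi|_S$ is injective and $D:=\varphi(S)$ satisfies $D\subseteq D^{\bot_{th}}$; averaging the representation, $P=|S|^{-1}\sum_{e\in S}e$ is the orthogonal projector onto $Q$, and since $\mathrm{tr}\,X(\mathbf{a})Z(\mathbf{b})=2^n$ when $(\mathbf{a},\mathbf{b})=0$ and $0$ otherwise, while the only scalar in $S$ is $I$, we get $\dim Q=\mathrm{tr}\,P=2^n/|S|=2^n/|D|$, i.e. $K=2^n/|D|$. Conversely, starting from an additive $D$ with $D\subseteq D^{\bot_{th}}$, one picks a $GF(2)$-basis $v_1,\dots,v_m$ of $D$ with $v_j=\omega\mathbf{a}_j+\overline{\omega}\mathbf{b}_j$ and lifts each $v_j$ to its Hermitian representative $\widetilde g_j=i^{\mathbf{a}_j\cdot\mathbf{b}_j}X(\mathbf{a}_j)Z(\mathbf{b}_j)$, which satisfies $\widetilde g_j^{\,2}=I$; since the trace-Hermitian form vanishes on $D$, the $\widetilde g_j$ commute pairwise, so they generate a commutative group $S$ with $S\cap\{\pm I,\pm iI\}=\{I\}$, giving a stabilizer code of dimension $2^n/|D|=K$.

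For the minimum distance, use the fact that $Q$ detects a Pauli $E\in\mathcal{G}_n$ iff $PEP$ is a scalar multiple of $P$: if $E$ anticommutes with some element of $S$ then $PEP=0$ by orthogonality of the characters of $S$; if $E$ centralizes $S$ then $E$ preserves $Q$, and $PEP$ is scalar iff $E$ acts as a scalar on $Q$, which (for $K>1$) happens precisely when $E\in\langle S,iI\rangle$. Hence the undetectable errors acting nontrivially on $Q$ are exactly $C_{\mathcal{G}_n}(S)\setminus\langle S,iI\rangle$, which $\varphi$ carries bijectively onto $D^{\bot_{th}}\setminus D$; taking minimum weights gives $d=wt(D^{\bot_{th}}\setminus D)$. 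If $K=1$, then $|D|=2^n$ forces $D=D^{\bot_{th}}$, every Pauli acts as a scalar on the one-dimensional $Q$, and the distance is by convention the smallest weight of a nonzero element of the stabilizer, i.e. $d=wt(D^{\bot_{th}})$. I expect the main obstacle to be the lifting step above — knowing that a consistent phase choice on a basis of $D$ extends to all of $S$ without producing $-I$. The cleanest way to see this is that $q(\omega\mathbf{a}+\overline{\omega}\mathbf{b}):=\mathbf{a}\cdot\mathbf{b}$ is a $GF(2)$-valued quadratic form on $GF(4)^n$ whose polarization is the trace-Hermitian form; its vanishing on $D$ makes $q|_D$ linear, which is exactly the condition guaranteeing that the Hermitian lifts of a basis generate a stabilizer group. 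Everything else reduces to bookkeeping with $\varphi$ and the character-sum evaluation of $\mathrm{tr}\,P$.
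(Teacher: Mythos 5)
The paper states this result purely as imported background, citing Calderbank--Rains--Shor--Sloane and Nebe--Rains--Sloane without giving a proof, and your argument is a correct reconstruction of the standard proof from those references: the $GF(4)$ encoding of Pauli operators, the identification of commutation with the trace-Hermitian form via $(\omega\mathbf{a}+\overline{\omega}\mathbf{b},\omega\mathbf{a}'+\overline{\omega}\mathbf{b}')_{th}=\mathbf{a}\cdot\mathbf{b}'+\mathbf{a}'\cdot\mathbf{b}$, the character-sum evaluation of $\operatorname{tr}P=2^n/|S|$, the Hermitian lifts $i^{\mathbf{a}\cdot\mathbf{b}}X(\mathbf{a})Z(\mathbf{b})$, and the detectability criterion $PEP\propto P$ are all exactly the ingredients of the cited proofs. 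The one point worth flagging is that your lift genuinely needs the phase $i$ (e.g.\ the stabilizer element corresponding to $v=1\in GF(4)$ must be $iX(1)Z(1)$, since $X(1)Z(1)$ squares to $-I$), so the stabilizer lives in $\langle E_n,iI\rangle$ rather than in the group $G_n=\{\pm X(\mathbf{a})Z(\mathbf{b})\}$ as literally defined in Section~2 --- a standard and harmless enlargement, but one the paper's own definitions gloss over.
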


If the   additive code $D$ in Theorem \ref{additive_quantum_codes} happens to be linear over $GF(4)$, then the trace-Hermitian dual code of $D$ is equal to the Hermitian dual code of $D$, i.e., $D^{\bot_{th}}=D^{\bot_{h}}$. Thus, quantum codes can be constructed  from classical linear codes.

\begin{lemma}[{\cite{calderbank1996good, calderbank1998quantum}}]
\label{QECCs Constructions}
\hspace{0.1mm}
\begin{itemize}
\item[1).]\emph{(CSS Construction):} Let $C_1$ and $C_2$ be two binary linear
codes with parameters
$[n,k_1,d_1]$ and $[n,k_2,d_2]$ such that $C_2^{\bot}\subseteq C_1$. Then there exists an $[[n,k_1+k_2-n,  d]]$ QECC with minimum distance
$d=\min \{\text {wt}(c)|c \in (C_1\backslash C_2^{\bot}) \cup(C_2\backslash C_1^{\bot}) \}$ which is  pure to $\min\{d_1,d_2\}$. If $d_1$ and $d_2$ are less than the minimum distances of $C_2^{\bot}$ and $C_1^{\bot}$, respectively, then the QECC is pure and has minimum distance $\min\{d_1,d_2\}$.
\item[2).]\emph{(Hermitian Construction):} If there exists a quaternary $[n,k,d] $ linear code $D$ such that
$D^{\bot_h}\subseteq D$, then there exists an $[[n,2k-n,  d]]$ QECC with minimum distance $d=\min \{\text {wt}(c)|c \in D\backslash D^{\bot_h}\}$ which is pure to $d$. If $d$ is less than the minimum distance of $D^{\bot_h}$, then the QECC is pure and has minimum
distance $d$.
\end{itemize}
\end{lemma}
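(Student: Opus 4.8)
The plan is to derive both items from Theorem \ref{additive_quantum_codes} by packaging the given data into a single trace-Hermitian self-orthogonal additive code over $GF(4)$ and then matching parameters. Throughout, fix a primitive element $\omega\in GF(4)$, so that $GF(4)=\{0,1,\omega,\overline{\omega}\}$ with $\overline{\omega}=\omega^{2}=\omega+1$ and $\omega\overline{\omega}=1$, and recall $\textrm{Tr}(1)=0$ and $\textrm{Tr}(\omega)=\textrm{Tr}(\overline{\omega})=1$.

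\emph{Hermitian construction.} I would take the additive code to be the quaternary \emph{linear} code $\mathcal{D}:=D^{\bot_{h}}$. Because $\mathcal{D}$ is $GF(4)$-linear, the remark following Theorem \ref{additive_quantum_codes} gives $\mathcal{D}^{\bot_{th}}=\mathcal{D}^{\bot_{h}}=(D^{\bot_{h}})^{\bot_{h}}=D$, so the hypothesis $D^{\bot_{h}}\subseteq D$ is exactly $\mathcal{D}\subseteq\mathcal{D}^{\bot_{th}}$. Counting cardinalities, $|\mathcal{D}|=4^{\,n-k}=2^{\,2(n-k)}$, hence $K=2^{n}/|\mathcal{D}|=2^{\,2k-n}$, and Theorem \ref{additive_quantum_codes} produces an $[[n,2k-n,wt(D\backslash D^{\bot_{h}})]]$ code. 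The two purity claims then follow from weight bookkeeping on $\mathcal{D}^{\bot_{th}}=D$: it has no nonzero word of weight below $d$, which is ``pure to $d$''; and if $d<wt(D^{\bot_{h}}\backslash\{0\})$ then a minimum-weight word of $D$ cannot lie in $D^{\bot_{h}}$, so $wt(D\backslash D^{\bot_{h}})=d$ and the code is pure.

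\emph{CSS construction.} For this I would use the $GF(2)$-linear embedding $(u,v)\mapsto\omega u+\overline{\omega}v$ of $GF(2)^{n}\times GF(2)^{n}$ into $GF(4)^{n}$ and set $\mathcal{D}:=\{\omega u+\overline{\omega}v : u\in C_{2}^{\bot},\ v\in C_{1}^{\bot}\}$. A one-time computation with the identities above collapses the trace-Hermitian form to the classical symplectic pairing,
\begin{equation}
(\omega u_{1}+\overline{\omega}v_{1},\ \omega u_{2}+\overline{\omega}v_{2})_{th}=u_{1}\cdot v_{2}+u_{2}\cdot v_{1},
\end{equation}
so $\mathcal{D}$ is trace-Hermitian self-orthogonal exactly because $C_{2}^{\bot}\subseteq C_{1}$ (equivalently $C_{1}^{\bot}\subseteq C_{2}$) forces each term $u_{i}\cdot v_{j}$ to vanish; the same pairing identifies $\mathcal{D}^{\bot_{th}}=\{\omega u+\overline{\omega}v : u\in C_{1},\ v\in C_{2}\}$. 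Since $|\mathcal{D}|=2^{\,n-k_{1}}2^{\,n-k_{2}}$ we get $K=2^{\,k_{1}+k_{2}-n}$. Observing that $\omega u+\overline{\omega}v$ is supported exactly on $\mathrm{supp}(u)\cup\mathrm{supp}(v)$ and lies in $\mathcal{D}$ iff $u\in C_{2}^{\bot}$ and $v\in C_{1}^{\bot}$, the minimum weight of $\mathcal{D}^{\bot_{th}}\backslash\mathcal{D}$ is bounded above by the weights of the words $\omega c$ with $c\in C_{1}\backslash C_{2}^{\bot}$ and $\overline{\omega}c'$ with $c'\in C_{2}\backslash C_{1}^{\bot}$, and below by $wt(\omega u+\overline{\omega}v)\ge\max\{wt(u),wt(v)\}$, so it equals $\min\{wt(c) : c\in(C_{1}\backslash C_{2}^{\bot})\cup(C_{2}\backslash C_{1}^{\bot})\}$. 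The purity claims come as in the Hermitian case: $\mathcal{D}^{\bot_{th}}$ has no nonzero word of weight below $\min\{d_{1},d_{2}\}$, and if $d_{1}<wt(C_{2}^{\bot}\backslash\{0\})$ and $d_{2}<wt(C_{1}^{\bot}\backslash\{0\})$ then $wt(C_{1}\backslash C_{2}^{\bot})=d_{1}$ and $wt(C_{2}\backslash C_{1}^{\bot})=d_{2}$, giving distance exactly $\min\{d_{1},d_{2}\}$ and purity.

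\emph{Where the work is.} The substantive steps are the one-time verification of the trace-Hermitian identity -- it is precisely what turns the $GF(4)$ self-orthogonality requirement into the familiar $C_{2}^{\bot}\subseteq C_{1}$ and the Hermitian condition $D^{\bot_{h}}\subseteq D$ -- together with the weight bookkeeping in the CSS case, namely that $wt(\omega u+\overline{\omega}v)=|\mathrm{supp}(u)\cup\mathrm{supp}(v)|$ and that the coset minima over $C_{1}\backslash C_{2}^{\bot}$ and $C_{2}\backslash C_{1}^{\bot}$ are exactly the quantities in the statement. Everything else is direct substitution into Theorem \ref{additive_quantum_codes}.
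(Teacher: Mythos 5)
Your proposal is correct. Note that the paper does not prove this lemma at all---it is imported verbatim from \cite{calderbank1996good, calderbank1998quantum}---so there is no in-paper proof to compare against; your derivation from Theorem~\ref{additive_quantum_codes} (the Hermitian case via $\mathcal{D}=D^{\bot_h}$ with $\mathcal{D}^{\bot_{th}}=\mathcal{D}^{\bot_h}=D$, and the CSS case via the embedding $(u,v)\mapsto\omega u+\overline{\omega}v$ collapsing the trace-Hermitian form to the symplectic pairing $u_1\cdot v_2+u_2\cdot v_1$) is exactly the standard argument of the cited references, and the cardinality, minimum-distance, and purity bookkeeping all check out.
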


Let $E$ be a set of possible quantum errors that belong to $G_n$. Let $ \mathbf{e}=\{(\mathbf{a}|\mathbf{b} )|X(\mathbf{a})Z(\mathbf{b})\in E\ \text{or }-X(\mathbf{a})Z(\mathbf{b})\in E\}$ be the corresponding set of classical errors. Denote by
\begin{eqnarray}
\label{E-correction}
\nonumber
\mathbf{e}_X&=&\{\mathbf{a}\in GF(2)^n|\exists \mathbf{b}\in GF(2)^n, (\mathbf{a}|\mathbf{b})\in \mathbf{e}\},\\
\mathbf{e}_Z&=&\{\mathbf{b}\in GF(2)^n|\exists \mathbf{a}\in GF(2)^n, (\mathbf{a}|\mathbf{b})\in \mathbf{e}\},
\end{eqnarray}
where $\mathbf{e}_X$ and  $\mathbf{e}_Z$ are called bit error class and phase error class, respectively.  Then, the CSS  construction  is generalized to correct   any quantum errors of set $E$ in \cite{vatan1999spatially}.

\begin{lemma}[{\cite[Theorem 2]{vatan1999spatially}}]
\label{vatan_Constructions}
Let $E$ be a set of possible quantum errors. If there are $[n,k]$ classical codes $\mathcal{C}_1$ and  $\mathcal{C}_2$
such that  $\mathcal{C}_2^\bot\subseteq \mathcal{C}_1$ and $\mathcal{C}_1$ has $\mathbf{e}_X$-correction  ability,
$\mathcal{C}_2$ has $\mathbf{e}_Z$-correction  ability, then there exists an $[[n,2^{2k-n}]]$ QECC that has $E$-correction  ability.
\end{lemma}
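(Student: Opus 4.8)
The plan is to realize the desired code as the CSS code attached to the pair $\mathcal{C}_1,\mathcal{C}_2$, and then to read off the $E$-correction ability directly from the stabilizer error-correction criterion. Since $\mathcal{C}_2^\bot\subseteq\mathcal{C}_1$ and both $\mathcal{C}_1,\mathcal{C}_2$ are $[n,k]$ codes, part~1) of Lemma~\ref{QECCs Constructions} produces a stabilizer code $Q$ of dimension $2^{k+k-n}=2^{2k-n}$, whose stabilizer group $S\subseteq G_n$ is generated by the $Z$-type operators $Z(\mathbf{w})$ with $\mathbf{w}$ running over the rows of a parity-check matrix $H_1$ of $\mathcal{C}_1$ (equivalently, a generator matrix of $\mathcal{C}_1^\bot$), together with the $X$-type operators $X(\mathbf{v})$ with $\mathbf{v}$ running over the rows of a parity-check matrix $H_2$ of $\mathcal{C}_2$. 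These generators commute: from $\mathcal{C}_2^\bot\subseteq\mathcal{C}_1$ we get $\mathcal{C}_1^\bot\subseteq\mathcal{C}_2$, hence $\mathbf{v}\cdot\mathbf{w}=0$ for every such $\mathbf{v}\in\mathcal{C}_2^\bot$ and $\mathbf{w}\in\mathcal{C}_1^\bot$, which is precisely the condition for $X(\mathbf{v})$ and $Z(\mathbf{w})$ to commute.

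Next I would invoke the standard criterion \cite{gottesman1997stabilizer,calderbank1998quantum}: a stabilizer code with stabilizer $S$ corrects a set $E\subseteq G_n$ of errors if and only if, for every pair $E_1,E_2\in E$, the operator $E_1^\dagger E_2$ is either a scalar multiple of an element of $S$ or anticommutes with some element of $S$; equivalently $E_1^\dagger E_2\notin N(S)\backslash S$, where $N(S)$ is the normalizer of $S$ in $G_n$. Fix $E_1,E_2\in E$ and write $E_i=\pm X(\mathbf{a}_i)Z(\mathbf{b}_i)$ with $(\mathbf{a}_i|\mathbf{b}_i)\in\mathbf{e}$, so that $\mathbf{a}_i\in\mathbf{e}_X$ and $\mathbf{b}_i\in\mathbf{e}_Z$. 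Using $Z(\mathbf{b})X(\mathbf{a})=(-1)^{\mathbf{a}\cdot\mathbf{b}}X(\mathbf{a})Z(\mathbf{b})$ one finds $E_1^\dagger E_2=\pm X(\mathbf{a}_1+\mathbf{a}_2)Z(\mathbf{b}_1+\mathbf{b}_2)$, a Pauli operator with bit part $\mathbf{a}_1+\mathbf{a}_2$ and phase part $\mathbf{b}_1+\mathbf{b}_2$. Since $Z(\mathbf{w})$ anticommutes with $X(\mathbf{a})Z(\mathbf{b})$ exactly when $\mathbf{w}\cdot\mathbf{a}=1$, and $X(\mathbf{v})$ anticommutes with $X(\mathbf{a})Z(\mathbf{b})$ exactly when $\mathbf{v}\cdot\mathbf{b}=1$, the operator $E_1^\dagger E_2$ lies in $N(S)$ if and only if $\mathbf{a}_1+\mathbf{a}_2\in\mathcal{C}_1$ and $\mathbf{b}_1+\mathbf{b}_2\in\mathcal{C}_2$.

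I would then finish by a three-way case split. If $\mathbf{a}_1\neq\mathbf{a}_2$, then the $\mathbf{e}_X$-correction ability of $\mathcal{C}_1$ says these patterns have different $H_1$-syndromes, i.e. $\mathbf{a}_1+\mathbf{a}_2\notin\mathcal{C}_1$, so some $Z$-type generator of $S$ anticommutes with $E_1^\dagger E_2$. Symmetrically, if $\mathbf{b}_1\neq\mathbf{b}_2$, the $\mathbf{e}_Z$-correction ability of $\mathcal{C}_2$ gives $\mathbf{b}_1+\mathbf{b}_2\notin\mathcal{C}_2$, so some $X$-type generator of $S$ anticommutes with $E_1^\dagger E_2$. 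In the remaining case $\mathbf{a}_1=\mathbf{a}_2$ and $\mathbf{b}_1=\mathbf{b}_2$ we have $E_1^\dagger E_2=\pm I$, a scalar multiple of the identity element of $S$, i.e. the degenerate branch of the criterion. These cases are exhaustive, so $E_1^\dagger E_2\notin N(S)\backslash S$ for all $E_1,E_2\in E$, whence $Q$ has $E$-correction ability; this is the required $[[n,2^{2k-n}]]$ code.

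The only points requiring genuine care, as opposed to mechanical bookkeeping, are: (i) getting the CSS convention right, so that the bit-error vector $\mathbf{a}$ is tested by $\mathcal{C}_1$ (its $H_1$-syndrome being revealed by the $Z$-type stabilizers) and the phase-error vector $\mathbf{b}$ by $\mathcal{C}_2$ — this is where the hypothesis $\mathcal{C}_2^\bot\subseteq\mathcal{C}_1$ is consumed, guaranteeing commutativity of the two stabilizer families; and (ii) isolating the degenerate case $\mathbf{a}_1=\mathbf{a}_2,\ \mathbf{b}_1=\mathbf{b}_2$, in which $E_1^\dagger E_2$ need not lie literally in $S$ but only up to a global sign. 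Equivalently, and perhaps more transparently for the decoding discussion later in the paper, one can phrase the argument operationally: measuring the $Z$-type stabilizers returns the $\mathcal{C}_1$-syndrome of $\mathbf{a}$, which by the $\mathbf{e}_X$-correction ability determines $\mathbf{a}$ uniquely among vectors of $\mathbf{e}_X$; measuring the $X$-type stabilizers returns the $\mathcal{C}_2$-syndrome of $\mathbf{b}$, determined uniquely among vectors of $\mathbf{e}_Z$; applying $X(\mathbf{a})Z(\mathbf{b})$ then exactly undoes the error. I do not anticipate any substantive obstacle beyond these conventions.
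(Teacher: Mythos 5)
Your proof is correct: the CSS construction, the commutation check via $\mathcal{C}_2^\perp\subseteq\mathcal{C}_1$, the reduction of the stabilizer error-correction criterion to the two coset conditions $\mathbf{a}_1+\mathbf{a}_2\notin\mathcal{C}_1$ and $\mathbf{b}_1+\mathbf{b}_2\notin\mathcal{C}_2$, and the degenerate case $E_1^\dagger E_2=\pm I$ are all handled properly, and this is essentially the standard argument of the cited source. Note that the paper itself offers no proof of this lemma --- it is imported verbatim from \cite[Theorem~2]{vatan1999spatially} --- so there is nothing internal to compare against; the only cosmetic quibble is that you state the criterion ``$E_1^\dagger E_2\notin N(S)\backslash S$'' as an equivalence, whereas only the sufficiency direction is needed (and is what you actually use).
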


Let $C_1$ and $C_2$ be two linear codes over $GF(q)$  with
parity check matrices   $H_1$ and $H_2$, respectively. It is easy to see that   $C_2^{\bot}\subseteq C_1$ is equivalent  to $H_1H_2^T=0$.
Let $D$ be a quaternary linear code with the parity check matrix $H$, then $D^{\bot_h}\subseteq D$ is equivalent  to $HH^\dagger=0$, where
the dagger ($\dagger$) denotes the conjugate transpose operation over matrices in $GF(4)$.

\subsection{\textbf{Classical Tensor Product Codes}}
\noindent
Let $C = [n,k,d]_q$ be a classical linear code over $GF(q)$, where $n$ is the code
length, $k$ is the dimension, $d$ is the minimum distance, and let $\rho = n - k$
be the number of   check symbols. 
 Define the random error-pattern class  $\xi_1$ as the class of error-patterns of weight less than or
equal to $r$. Define the burst error-pattern class  $\xi_2$ as  the class of error-patterns in which
the errors span no more than $b $ symbols. Let $H_{c_1}$ be the parity check matrix of  a linear code $C_1=[n_1, k_1, d_1]_q$, and the number of check
symbols is $\rho_1=n_1-k_1$. We assume $C_1$ corrects any error-pattern that belongs to class $\xi_i(i=1\text{ or }2)$. Let  $C_2=[n_2, k_2, d_2]_{q^{\rho_1}}$  be a linear code over the  extension field $GF(q^{\rho_1})$,  and the number of
check symbols is $\rho_2=n_2-k_2$. Let $H_{c_2}$ be the parity check matrix of $C_2$, and assume  $C_2$ corrects any error-pattern belongs to class $\zeta_i(i=1\text{ or }2)$, where $\zeta_1$ denotes a random error-pattern class and  $\zeta_2$ denotes a burst error-pattern class. We denote by
\begin{equation}
\mathcal{C}\equiv C_2\otimes_HC_1
\end{equation}
the tensor product code of $C_1$ and $C_2$\footnote{The direct product code defined by the direct product of the generator matrices of $C_1$ and $C_2$ is usually denoted by $C_1\otimes C_2$ (see \cite{macwilliams1981theory}). In order to  distinguish from that, we add a subscript `$H$' under the tensor product.}. If  we
consider  $H_{c_1}$  as a $1\times n_1$ matrix with elements
from $GF(q^{\rho_1})$,  then the parity check matrix $H_{\mathcal{C}}$ of   $\mathcal{C}$ is  the tensor product of $H_{c_1}$ and $H_{c_2}$ 
\begin{equation}
\label{Def_TPC}
H_{\mathcal{C}}=H_{c_2}\otimes H_{c_1}.
\end{equation}
Convert
the elements of $H_{\mathcal{C}}$ into $q$-ary columns with $GF(q)$ elements, then we can obtain the parity check matrix of $\mathcal{C}$ with $GF(q)$ elements.

\begin{lemma}[{\cite[Theorem 1]{wolf1965codes}}]
\label{wolftheorem}
If the codewords of $\mathcal{C}=C_2\otimes_HC_1$  consist  of $n_2$
subblocks, each subblock containing $n_1$   codewords, then
the code $\mathcal{C}$ can correct all error-patterns where the
subblocks containing errors form a pattern belonging to
class  $\xi_i(i=1\text{ or }2)$ and the errors within each erroneous subblock fall
within the class $\zeta_i(i=1\text{ or }2)$.
\end{lemma}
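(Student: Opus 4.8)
This is Wolf's classical result, and the plan is to determine exactly what the parity-check matrix $H_{\mathcal C}=H_{c_2}\otimes H_{c_1}$ computes on an error vector and then read off a two-stage decoder from that. First I would set notation: fix a $GF(q)$-linear bijection $\phi:GF(q)^{\rho_1}\to GF(q^{\rho_1})$ (a choice of basis), so that the $t$-th entry of $H_{c_1}$ viewed as a $1\times n_1$ matrix over $GF(q^{\rho_1})$ is $h_t^{(1)}=\phi(\mathbf h_t)$, where $\mathbf h_t\in GF(q)^{\rho_1}$ is the $t$-th column of the original $H_{c_1}$. Write an arbitrary error as $e=(e_1,\dots,e_{n_2})$ with $e_j=(e_{j,1},\dots,e_{j,n_1})\in GF(q)^{n_1}$ the $j$-th subblock, and let $\sigma_j:=H_{c_1}e_j^{T}\in GF(q)^{\rho_1}$ be the $C_1$-syndrome of that subblock.

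The central step is the syndrome identity. Since the $(l,(j,t))$-entry of $H_{c_2}\otimes H_{c_1}$ equals $h^{(2)}_{lj}h_t^{(1)}$, where $h^{(2)}_{lj}$ are the entries of $H_{c_2}$, the syndrome of $e$ over $GF(q^{\rho_1})$ has $l$-th coordinate $s_l=\sum_{j=1}^{n_2} h^{(2)}_{lj}\big(\sum_{t=1}^{n_1} h_t^{(1)}e_{j,t}\big)$. Because each $e_{j,t}$ lies in $GF(q)$ and $\phi$ is $GF(q)$-linear, the inner sum equals $\sum_t\phi(\mathbf h_t)e_{j,t}=\phi\big(\sum_t\mathbf h_t e_{j,t}\big)=\phi(\sigma_j)$. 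Hence, identifying $GF(q)^{\rho_1}$ with $GF(q^{\rho_1})$ via $\phi$ and setting $\sigma=(\sigma_1,\dots,\sigma_{n_2})\in GF(q^{\rho_1})^{n_2}$, we get $s=H_{c_2}\sigma^{T}$: \emph{the tensor-product syndrome of $e$ is precisely the $C_2$-syndrome of the length-$n_2$ vector whose $j$-th component is the $C_1$-syndrome of the $j$-th subblock of $e$.} (The subsequent conversion of the entries of $s$ into $\rho_1\rho_2$ coordinates over $GF(q)$ only rewrites the same data and changes nothing.)

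Given this, decoding splits into two stages handled by the two component codes. Let $J=\{j:e_j\neq 0\}$ index the erroneous subblocks, and assume — as the hypothesis asserts — that each $e_j$ with $j\in J$ lies in the error class that $C_1$ corrects and that the pattern of erroneous subblocks belongs to the class that $C_2$ corrects. Using the standard meaning of ``$C$ corrects class $X$'' (there is a map inverting the syndrome on $X$; in particular members of $X$ have pairwise-distinct syndromes, and the zero pattern, which lies in $X$, has zero syndrome), we get $\sigma_j=0$ for $j\notin J$ and $\sigma_j\neq 0$ for $j\in J$, so the support of $\sigma$ is exactly $J$ and hence $\sigma$, regarded as an error vector over $GF(q^{\rho_1})^{n_2}$, belongs to the class $C_2$ corrects. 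Applying the $C_2$-decoder to $s=H_{c_2}\sigma^{T}$ recovers $\sigma$; then applying the $C_1$-decoder to each nonzero $\sigma_j$ recovers $e_j$ (with $e_j=0$ where $\sigma_j=0$), which reconstructs $e$. Equivalently, to justify the word ``correct'': if $e,e'$ are two admissible error patterns with the same tensor-product syndrome, then $\sigma=\sigma'$ by $C_2$-correctability, whence $e_j=e_j'$ for every $j$ by $C_1$-correctability, so $e=e'$.

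I expect the one genuine subtlety to be the syndrome identity $s=H_{c_2}\sigma^{T}$ — specifically, checking that reinterpreting the columns of $H_{c_1}$ as elements of $GF(q^{\rho_1})$ is compatible with forming $GF(q)$-linear combinations of those columns (where the $GF(q)$-linearity of $\phi$ is used) and that the resulting element then multiplies correctly against the $GF(q^{\rho_1})$-entries of $H_{c_2}$; everything else is bookkeeping. It is also worth noting explicitly that the component codes' error-correction capabilities enter the argument only through the syndrome-injectivity they imply, exactly as used above.
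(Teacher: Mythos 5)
Your proof is correct, and there is nothing in the paper to compare it against: the paper states this lemma with a citation to Wolf's Theorem~1 and gives no proof of its own. Your argument is the standard (essentially Wolf's original) one, and you have isolated the right crux: the identity $s=H_{c_2}\sigma^{T}$, i.e.\ that the tensor-product syndrome of $e$ is the $C_2$-syndrome of the vector of $C_1$-subblock-syndromes, which hinges exactly on the $GF(q)$-linearity of the identification $\phi$ between $GF(q)^{\rho_1}$ and $GF(q^{\rho_1})$; the two-stage decoder and the syndrome-injectivity formulation of ``corrects class $X$'' then do the rest, including the observation that a nonzero correctable $e_j$ has nonzero $\sigma_j$, so the support of $\sigma$ is exactly the set of erroneous subblocks.

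One point you should make explicit rather than pass over with ``as the hypothesis asserts'': what you actually prove is that the within-subblock errors must lie in the class $C_1$ corrects ($\xi_i$) and the pattern of erroneous subblocks in the class $C_2$ corrects ($\zeta_i$). The lemma as printed in the paper attaches the classes the other way around (subblock pattern in $\xi_i$, within-subblock errors in $\zeta_i$), which is dimensionally inconsistent with the setup ($C_1$ has length $n_1$, the subblock length, while $C_2$ has length $n_2$, the number of subblocks) and with how the lemma is later applied in Theorems~\ref{Reversible_BCH} and~\ref{Fire_RS_QTPC}. Your version matches Wolf's original statement and the intended meaning; the paper's wording appears to have swapped $\xi_i$ and $\zeta_i$, and your write-up would be stronger if it noted and corrected this rather than silently adopting the fixed reading.
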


The elements of  the  parity check  matrix $H_{c_2}$ can also be  represented   by the companion matrices (see \cite{maucher2000equivalence,imai1981generalized}).  Let $f(x)=f_0+f_1x+\cdots+f_{\rho_1-1}x^{\rho_1-1}+x^{\rho_1}$ be a primitive polynomial of degree $\rho_1$ over $GF(2^{\rho_1})$ and let $\alpha$ be a primitive element of $GF(2^{\rho_1})$. The companion matrix $M$ of $f(x)$ is defined to be the $\rho_1\times \rho_1$ matrix, see Ref. \cite[Ch.4]{macwilliams1981theory},
  \begin{equation}
\label{companion_matrix}
M =
\left[
\begin{array}{ccccc}
 0&1&0&\cdots&0\\
 0&0&1&\cdots&0\\
\vdots&\vdots&\vdots&\ddots&\vdots\\
 0&0&0&\cdots&1\\
f_0&f_1&f_2&\cdots&f_{\rho_1-1}
\end{array}
\right]_{\rho_1\times \rho_1}
\end{equation}
Then for any element $a=\alpha^i$ of $GF(2^{\rho_1})$, the companion matrix of $a $ is denoted by $[a]=M^i$, a $\rho_1\times \rho_1$ matrix with $GF(2)$ elements.
Denote   the parity check matrix of the component code $C_2$ by $H_{c_2}=(b_{ij})_{\rho_2\times n_2}$   with $GF(2^{\rho_1})$ elements, i.e., $b_{ij}\in GF(2^{\rho_1})$ for $1\leq i \leq \rho_2$ and $1\leq j\leq n_2$. We use the notation in \cite{maucher2000equivalence} and denote by $[H_{c_2}]=([b_{ij}])_{\rho_1\rho_2\times \rho_1n_2}$ the companion matrix representation of $H_{c_2}$. 
Then the parity check matrix $H_{\mathcal{C}}$ of a binary TPC can be defined by
\begin{eqnarray}
\label{companion_matrix_TPC}
\nonumber
 &&H_{\mathcal{[C]}} \equiv [H_{c_2}^t]\otimes  H_{c_1} \\
&&= \left[
\begin{array}{cccc}
[b_{11}^t]H_{c_1}& [b_{12}^t]H_{c_1}&\cdots& [b_{1n_2}^t]H_{c_1} \\

[b_{21}^t]H_{c_1}& [b_{22}^t]H_{c_1}&\cdots& [b_{2n_2}^t]H_{c_1} \\
\vdots&\vdots&\vdots&\vdots\\

[b_{\rho_21}^t]H_{c_1}& [b_{\rho_22}^t]H_{c_1}&\cdots& [b_{\rho_2n_2}^t]H_{c_1}
\end{array}
\right],
\end{eqnarray}
where the matrix $[H_{c_2}^t]$ is obtained by transposing the component companion matrices  of $[H_{c_2} ]$, and $[b_{ij}^t]$ denotes by the transpose of $[b_{ij}]$.

According to \cite{lidl1997finite}, the companion matrix can also be defined to be the transpose of (\ref{companion_matrix}), alternately. Therefore, the parity check matrix $H_{\mathcal{C}}$ of a binary TPC can also be defined alternately without transposing the companion matrices, i.e.,
\begin{eqnarray}
\label{companion_matrix_TPC2}
\nonumber
 &&H_{\mathcal{[C]}} \equiv  [H_{c_2}]\otimes  H_{c_1} \\
&&= \left[
\begin{array}{cccc}
[b_{11}]H_{c_1}& [b_{12}]H_{c_1}&\cdots& [b_{1n_2}]H_{c_1} \\

[b_{21}]H_{c_1}& [b_{22}]H_{c_1}&\cdots& [b_{2n_2}]H_{c_1} \\
\vdots&\vdots&\vdots&\vdots\\

[b_{\rho_21}]H_{c_1}& [b_{\rho_22}]H_{c_1}&\cdots& [b_{\rho_2n_2}]H_{c_1}
\end{array}
\right].
\end{eqnarray}

Typically, the minimum distance of the resultant TPCs is bounded by the minimum distance  of the component codes, and the code length  and the number of check symbols get multiplied.

\begin{lemma}[{\cite{wolf1965codes,imai1981generalized}}]
\label{MDOfTPC}
Let $C_1=[n_1,k_1,d_1]_q$ and $C_2=[n_2,k_2,d_2]_{q^{\rho_1}}$ be two linear codes, and the numbers of check symbols are  $\rho_1=n_1-k_1$ and $\rho_2=n_2-k_2$, respectively. Then the tensor product code $\mathcal{C}=C_2\otimes_HC_1$ has parameters $[n_1n_2,n_1n_2-\rho_1\rho_2,\min\{d_1,d_2\}]_q$.
\end{lemma}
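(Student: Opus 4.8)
The plan is to establish the three parameters of $\mathcal{C} = C_2 \otimes_H C_1$ separately: the length, the redundancy (hence dimension), and the minimum distance. First I would verify the length. The parity check matrix $H_{\mathcal{C}}$ is built from $H_{c_2} \otimes H_{c_1}$ where $H_{c_1}$ is viewed as a $1 \times n_1$ matrix over $GF(q^{\rho_1})$, or equivalently from the companion-matrix form $[H_{c_2}] \otimes H_{c_1}$ of display~(\ref{companion_matrix_TPC2}), which is a $\rho_1\rho_2 \times \rho_1 n_2$ — wait, more carefully: $H_{c_1}$ is $\rho_1 \times n_1$ over $GF(q)$, and each entry $[b_{ij}]$ is $\rho_1 \times \rho_1$, so each block $[b_{ij}]H_{c_1}$ is $\rho_1 \times n_1$; stacking $\rho_2$ block-rows and $n_2$ block-columns gives a $\rho_1\rho_2 \times n_1 n_2$ matrix over $GF(q)$. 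So the code length is $n_1 n_2$, which is immediate from the block structure.

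Next I would compute the redundancy, i.e.\ the rank of $H_{\mathcal{C}}$ over $GF(q)$. This is the crux of the dimension claim: I want $\operatorname{rank}(H_{\mathcal{C}}) = \rho_1 \rho_2$, so that the dimension is $n_1 n_2 - \rho_1 \rho_2$. The key fact is that $H_{c_1}$, interpreted over $GF(q^{\rho_1})$ as a nonzero $1 \times n_1$ vector, has rank $1$ over that field (since $C_1$ is a genuine code with $\rho_1 \geq 1$ check symbols, $H_{c_1} \neq 0$), and $H_{c_2}$ has rank $\rho_2$ over $GF(q^{\rho_1})$. For a tensor (Kronecker) product of matrices over a field one has $\operatorname{rank}(A \otimes B) = \operatorname{rank}(A)\operatorname{rank}(B)$, so $H_{c_2} \otimes H_{c_1}$ has rank $\rho_2$ over $GF(q^{\rho_1})$. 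Passing to the companion-matrix ($GF(q)$-expanded) representation multiplies the rank by $\rho_1$ — this is the standard correspondence between $GF(q^{\rho_1})$-linear maps and their $GF(q)$-matrix blow-ups via companion matrices (as used in \cite{maucher2000equivalence,imai1981generalized}) — giving $\operatorname{rank}_{GF(q)}(H_{\mathcal{C}}) = \rho_1 \rho_2$. Hence $\dim \mathcal{C} = n_1 n_2 - \rho_1 \rho_2$.

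For the minimum distance I would argue $d(\mathcal{C}) \geq \min\{d_1, d_2\}$ by a contrapositive/syndrome argument. Take a nonzero codeword $c \in \mathcal{C}$, partition it into $n_2$ subblocks of length $n_1$. Feeding $c$ through the block structure of $H_{\mathcal{C}}$: the $GF(q)$-syndrome of each subblock under $H_{c_1}$ is a length-$\rho_1$ vector, which we reinterpret as an element of $GF(q^{\rho_1})$; collecting these $n_2$ field elements gives a vector $s \in GF(q^{\rho_1})^{n_2}$, and the defining equation $H_{\mathcal{C}} c = 0$ forces $H_{c_2} s^T = 0$, i.e.\ $s \in C_2$. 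Now two cases: if $s = 0$, then every subblock-syndrome vanishes, so each nonzero subblock is itself a nonzero codeword of $C_1$, hence has weight $\geq d_1$, and $c$ being nonzero has at least one such subblock, so $\operatorname{wt}(c) \geq d_1$. If $s \neq 0$, then $s$ is a nonzero codeword of $C_2$, so it has at least $d_2$ nonzero coordinates; each nonzero coordinate of $s$ corresponds to a subblock with nonzero $H_{c_1}$-syndrome, which therefore cannot be the zero vector, so that subblock has weight $\geq 1$; hence $\operatorname{wt}(c) \geq d_2$. Either way $\operatorname{wt}(c) \geq \min\{d_1,d_2\}$. I would then note (or cite \cite{wolf1965codes,imai1981generalized}) that this bound is attained — e.g.\ embed a minimum-weight codeword of $C_1$ in a single subblock whose index corresponds to a zero position pattern realizable by $C_2$, or take a weight-$d_2$ codeword of $C_2$ each of whose nonzero symbols is lifted to a weight-one $GF(q)$ pattern that is a valid syndrome — so $d(\mathcal{C}) = \min\{d_1,d_2\}$.

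The main obstacle is the rank computation, specifically making rigorous the claim that the companion-matrix expansion multiplies $GF(q^{\rho_1})$-rank by $\rho_1$ and that it commutes with the Kronecker product in the way needed; this requires being careful that the map $a \mapsto [a]$ from $GF(q^{\rho_1})$ to $\rho_1 \times \rho_1$ matrices over $GF(q)$ is a ring isomorphism onto its image and that the "scalar extension then expand" and "expand then tensor" operations agree, so that $\operatorname{rank}_{GF(q)}([H_{c_2}] \otimes H_{c_1}) = \rho_1 \cdot \operatorname{rank}_{GF(q^{\rho_1})}(H_{c_2} \otimes H_{c_1})$. The minimum-distance lower bound is routine once the subblock-syndrome picture is set up, and the length is trivial; attainability of the distance bound is the only other point needing a short explicit construction, which I would either give in one line or defer to the cited references.
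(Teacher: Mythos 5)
First, a point of reference: the paper does not prove Lemma~\ref{MDOfTPC} at all --- it is quoted from \cite{wolf1965codes,imai1981generalized} without proof --- so your attempt can only be judged on its own merits. Your length computation, your dimension computation, and your lower bound $d(\mathcal{C})\geq\min\{d_1,d_2\}$ via the subblock--syndrome dichotomy are all correct. For the rank step, the cleanest way to discharge the worry you flag is the factorization $[H_{c_2}]\otimes H_{c_1}=[H_{c_2}]\cdot(I_{n_2}\otimes H_{c_1})$: the right factor has full row rank $\rho_1 n_2$, hence is surjective, and the left factor is the entrywise companion blow-up of a full-row-rank matrix over $GF(q^{\rho_1})$, which does have $GF(q)$-rank $\rho_1\rho_2$ because $a\mapsto[a]$ is a field embedding. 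Do keep this separate from the paper's column-expansion map $\psi$ applied to an arbitrary matrix over the extension field, which in general does \emph{not} multiply rank by $\rho_1$ (that failure is precisely why BCH codes can exceed the naive dimension bound); it works here only because of the product structure.

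The genuine gap is the attainability claim in the case $d_2<d_1$. Your proposed construction lifts each nonzero symbol of a weight-$d_2$ codeword of $C_2$ to a weight-one vector of $GF(q)^{n_1}$ having that symbol as its $H_{c_1}$-syndrome. But only the $n_1(q-1)$ nonzero scalar multiples of the columns of $H_{c_1}$ occur as syndromes of weight-one vectors, out of $q^{\rho_1}-1$ nonzero syndromes in all; unless the columns of $H_{c_1}$ cover the whole projective space (a Hamming-code-like condition) most symbols of $GF(q^{\rho_1})$ are simply not realizable at weight one. This is not a presentational gap but a counterexample to the equality as stated: take $C_1=[7,1,7]$ (so $\rho_1=6$, and only $7$ of the $63$ nonzero syndromes are weight-one realizable) and $C_2=[2,1,2]_{2^6}$ generated by $(1,\gamma)$ with $\gamma$ chosen outside the (at most $49$-element) set of ratios of weight-one-realizable syndromes; then the resulting TPC has no codeword of weight $2=\min\{d_1,d_2\}$. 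What is provable unconditionally is $\min\{d_1,d_2\}\le d(\mathcal{C})\le d_1$, with equality $d(\mathcal{C})=d_1$ whenever $d_1\le d_2$; the exact value $\min\{d_1,d_2\}$ is a design distance, and deferring to \cite{wolf1965codes,imai1981generalized} does not close the gap since those references likewise establish only the error-control capability, not tightness. You should therefore either add a hypothesis guaranteeing weight-one realizability of the relevant syndromes, or weaken the conclusion to a lower bound (which is all that the subsequent constructions in the paper actually need, though their purity/exactness claims then inherit the same caveat).
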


\section{Quantum Tensor Product Codes}
\label{quantum_TPC}
\noindent
In general,  a classical TPC can be constructed from  arbitrary two shorter codes over the same field or with one of the component codes over the extension field. In \cite{grassl2005quantum}, QTPCs based on self-orthogonal tensor product codes with component codes over the same field were constructed.

\begin{lemma}[{\cite{grassl2005quantum}}]
\label{TPC_Same_Fields}
Let $H_{c_1}$  and $H_{c_2}$ be the parity check matrices  of two classical linear codes $C_1$ and $C_2$ over $GF(q)$,  respectively.  Denote by $H=H_1\otimes H_2$ the tensor product of $H_1$ and $H_2$. Let  $\mathcal{C} $ be the tensor product code of $C_1$ and $C_2$ with the parity check matrix   given by  $H$, then $\mathcal{C}^\bot\subseteq\mathcal{C}$   if and only if $C_1^{\bot} \subseteq C_1$ $($or $C_2^\bot\subseteq C_2)$,  and $\mathcal{C}^{\bot_h}\subseteq\mathcal{C}$    if and only if   $C_1^{\bot_h} \subseteq C_1$ $($or  $C_2^{\bot_h}\subseteq C_2)$.
\end{lemma}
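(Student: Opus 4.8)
The plan is to reduce the dual-containing (resp.\ Hermitian dual-containing) condition for the tensor product code $\mathcal{C}$ to the corresponding condition on a single component code, by exploiting the multiplicativity of the tensor product under matrix multiplication and transposition. Recall from the discussion just before this lemma that, for linear codes over $GF(q)$ with parity check matrices $H_1,H_2$, one has $\mathcal{C}^\bot\subseteq\mathcal{C}$ if and only if $HH^T=0$, where $H=H_1\otimes H_2$; and over $GF(4)$, $\mathcal{C}^{\bot_h}\subseteq\mathcal{C}$ if and only if $HH^\dagger=0$. So the entire statement is equivalent to two matrix identities, and the proof is essentially a computation with the mixed-product property $(A\otimes B)(C\otimes D)=(AC)\otimes(BD)$.

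First I would handle the Euclidean case. Using $(H_1\otimes H_2)^T=H_1^T\otimes H_2^T$ and the mixed-product rule, I get
\begin{equation}
HH^T=(H_1\otimes H_2)(H_1^T\otimes H_2^T)=(H_1H_1^T)\otimes(H_2H_2^T).
\end{equation}
A tensor product of two matrices is the zero matrix exactly when at least one of the factors is zero (assuming neither factor is vacuous), so $HH^T=0$ iff $H_1H_1^T=0$ or $H_2H_2^T=0$, i.e.\ iff $C_1^\bot\subseteq C_1$ or $C_2^\bot\subseteq C_2$. This gives the first equivalence. The Hermitian case is identical in structure: since conjugation over $GF(4)$ (entrywise squaring) distributes over tensor products and commutes with transposition, $(H_1\otimes H_2)^\dagger=H_1^\dagger\otimes H_2^\dagger$, hence
\begin{equation}
HH^\dagger=(H_1H_1^\dagger)\otimes(H_2H_2^\dagger),
\end{equation}
which vanishes iff $H_1H_1^\dagger=0$ or $H_2H_2^\dagger=0$, i.e.\ iff $C_1^{\bot_h}\subseteq C_1$ or $C_2^{\bot_h}\subseteq C_2$.

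The main point requiring care — the only real obstacle — is the claim that $A\otimes B=0$ forces $A=0$ or $B=0$, which justifies the ``only if'' direction; this is immediate from the block structure of the Kronecker product (a nonzero entry of $A$ and a nonzero entry of $B$ would produce a nonzero entry of $A\otimes B$), but it should be stated explicitly since it is where the disjunction in the conclusion comes from. I would also note for completeness that this lemma treats the case $H=H_1\otimes H_2$ with both component codes over the same field $GF(q)$, so there is no field-extension bookkeeping or companion-matrix expansion of the kind appearing in~(\ref{companion_matrix_TPC}); the parity check matrix is used directly. The ``$\mathcal{C}^\bot\subseteq\mathcal{C}$'' (rather than $\mathcal{C}\subseteq\mathcal{C}^\bot$) orientation is simply a matter of which code we declare to have parity check matrix $H$ versus generator matrix $H$, and matches the CSS/Hermitian constructions in Lemma~\ref{QECCs Constructions}.
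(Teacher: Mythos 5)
Your proof is correct. Note, however, that the paper does not prove this lemma at all---it is imported verbatim from the cited reference \cite{grassl2005quantum}---so there is no in-paper argument to compare against. Your route (reduce dual-containment to $HH^{T}=0$, resp.\ $HH^{\dagger}=0$, apply the mixed-product property to get $(H_1H_1^{T})\otimes(H_2H_2^{T})$, and use that a Kronecker product over a field vanishes iff one factor vanishes) is the standard and essentially unavoidable argument, and you correctly isolate the one step that needs explicit justification, namely that $A\otimes B=0$ forces $A=0$ or $B=0$, which is what yields the disjunction in the statement.
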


While considering TPCs with one of the component codes over the extension field, we show that if one of the component codes satisfies   certain dual-containing conditions, the resultant TPCs can also be dual contained.

Let $\alpha$ be a primitive element of $GF(q^{\rho_1})$  and assume that the basis used for   vector representation of elements in $GF(q^{\rho_1})$ is $\{1,\alpha,\alpha^2,\ldots,\alpha^{\rho_1-1}\}$.  We define the bijective  map \cite{macwilliams1981theory,lidl1997finite} $\psi: GF(q^{\rho_1})  \mapsto  GF(q)^{\rho_1} $ as
\begin{eqnarray}
  \nonumber
 \psi(\beta)&=&\psi(a_0+a_1\alpha+\cdots+a_{\rho_1-1}\alpha^{\rho_1-1})\\
 &\equiv&(a_0,a_1,\ldots,a_{\rho_1-1})^T,
\end{eqnarray}
where $\beta= a_0+a_1\alpha+\cdots+a_{\rho_1-1}\alpha^{\rho_1-1}$ is an element of $GF(q^{\rho_1})$, $a_i\in GF(q)$  for $0\leq i\leq \rho_1-1$, and $T$ is a  transpose of vector.

For the parity check matrix $H=(\alpha_{ij})_{k\times n}$  of code $C$ over $GF(q^{\rho_1})$,  we denote by
\[\psi(H)\equiv\big(\psi(\alpha_{ij}) \big)_{k\rho_1\times n}\] the matrix with elements converted from $H$ under $\psi$, and we denote by  $ \psi(C)$  the  corresponding subfield subcode with parity check matrix $ \psi(H)$. The inverse map of $\psi$ can be defined  as \begin{equation}\psi^{-1}:\
  GF(q)^{\rho_1}   \mapsto GF(q^{\rho_1})\end{equation} based on the same basis $\{1,\alpha,\alpha^2,\ldots,\alpha^{\rho_1-1}\}$. For the matrix $H_{c_1}=(c_{ij})_{\rho_1\times n}$  with $GF(q)$ elements, we denote by \[ \psi^{-1}(H_{c_1})\equiv\big(\psi^{-1}(col_{j}) \big)_{1 \times n}\] the matrix with elements converted from $H_{c_1}$ under $\psi^{-1}$, where $col_{j}=(c_{1j}, c_{2j},\ldots,c_{\rho_1j})^T$  for $1\leq j\leq n$.
Then the $q$-ary $\rho_1\times n_1$ parity check matrix $H_{c_1}$ of $C_1$ can   be
considered as a  $1\times n_1$  matrix $\psi^{-1}(H_{c_1})$ with elements from $GF(q^{\rho_1})$,  and $H_{\mathcal{C}}=H_{c_2}\otimes \psi^{-1}(H_{c_1})$  is a $\rho_2\times n_1n_2$ matrix with elements from $GF(q^{\rho_1})$.  If we replace the elements of $H_{\mathcal{C}}$ by   $\rho_1$-tuples   based on the same basis over $GF(q^{\rho_1})$, then we can get  a  $\rho_1\rho_2\times n_1n_2$ matrix $\psi(H_{\mathcal{C}})$ with  $GF(q)$ elements. Then the null space of the $\rho_1\rho_2\times n_1n_2$ $q$-ary matrix
$\psi(H_{\mathcal{C}})$ corresponds to a  $q$-ary TPC with parameters $\mathcal{C}=[n_1n_2,n_1n_2-\rho_1\rho_2]_q$.

\begin{lemma}
\label{CSS_TPC1}
Let $C_1=[n_1,k_1,d_1]_q$ be a $q$-ary linear code, and let $C_2=[n_2,k_2,d_2]_{q^{\rho_1}}$ be a linear code  over the extension field $GF(q^{\rho_1})$, and the numbers of   check symbols are $\rho_1=n_1-k_1$ and $\rho_2=n_2-k_2$, respectively. Let $\mathcal{C}=C_2\otimes_H C_1$ be the tensor product code of $C_1$ and $C_2$. If $C_1^\perp\subseteq C_1$ $($or $\psi(C_2)^\perp\subseteq\psi(C_2))$, then  $\mathcal{C}^\perp\subseteq\mathcal{C}$;  If $C_1^{\perp_h}\subseteq C_1$ $($or $\psi(C_2)^{\perp_h}\subseteq\psi(C_2))$, then $\mathcal{C}^{\perp_h}\subseteq\mathcal{C}$. 
\end{lemma}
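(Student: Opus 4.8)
The plan is to reduce the dual-containment of the tensor product code $\mathcal{C}$ to a matrix identity and then exploit the multiplicativity of the tensor (Kronecker) product. Recall from the preliminaries that for $q$-ary linear codes with parity check matrices $H_1,H_2$, the condition $C_2^\perp\subseteq C_1$ is equivalent to $H_1H_2^T=0$; in particular $\mathcal{C}^\perp\subseteq\mathcal{C}$ is equivalent to $\psi(H_{\mathcal{C}})\,\psi(H_{\mathcal{C}})^T=0$, and (for the quaternary case) $\mathcal{C}^{\perp_h}\subseteq\mathcal{C}$ is equivalent to $\psi(H_{\mathcal{C}})\,\psi(H_{\mathcal{C}})^\dagger=0$. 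So the whole statement amounts to checking two matrix products vanish. First I would fix the companion-matrix description of the tensor product, writing $\psi(H_{\mathcal{C}}) = [H_{c_2}]\otimes H_{c_1}$ as in \eqref{companion_matrix_TPC2} (or the transposed variant \eqref{companion_matrix_TPC}), where $[H_{c_2}]$ is the $\rho_1\rho_2\times\rho_1 n_2$ binary companion-matrix lift of $H_{c_2}$ and $H_{c_1}$ is the $\rho_1\times n_1$ binary parity check matrix of $C_1$.

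Next I would use the mixed-product property $(A\otimes B)(C\otimes D) = (AC)\otimes(BD)$ together with $(A\otimes B)^T = A^T\otimes B^T$. This gives
\begin{equation}
\psi(H_{\mathcal{C}})\,\psi(H_{\mathcal{C}})^T
= \bigl([H_{c_2}]\otimes H_{c_1}\bigr)\bigl([H_{c_2}]^T\otimes H_{c_1}^T\bigr)
= \bigl([H_{c_2}][H_{c_2}]^T\bigr)\otimes\bigl(H_{c_1}H_{c_1}^T\bigr).
\end{equation}
Hence if $C_1^\perp\subseteq C_1$, i.e. $H_{c_1}H_{c_1}^T=0$, the right-hand factor vanishes and the whole product is $0$, so $\mathcal{C}^\perp\subseteq\mathcal{C}$. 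For the other sufficient condition I would instead group the Kronecker factors the opposite way: the hypothesis $\psi(C_2)^\perp\subseteq\psi(C_2)$ says exactly that the binary matrix $[H_{c_2}]$ (which is the parity check matrix of the subfield subcode $\psi(C_2)$) satisfies $[H_{c_2}][H_{c_2}]^T=0$, which again kills the tensor product. The Hermitian statements are identical but with transpose replaced by conjugate-transpose $\dagger$; here one uses that conjugation over $GF(4)$ commutes with taking companion matrices and with the Kronecker product, so $(A\otimes B)^\dagger = A^\dagger\otimes B^\dagger$, and the same two groupings finish the argument from $H_{c_1}H_{c_1}^\dagger=0$ or $[H_{c_2}][H_{c_2}]^\dagger=0$.

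The one genuinely non-routine point — and the step I would treat most carefully — is verifying that the companion-matrix lift is compatible with transposition in the way \eqref{companion_matrix_TPC}–\eqref{companion_matrix_TPC2} suggest, i.e. that $\psi(H_{\mathcal{C}})$ really equals $[H_{c_2}]\otimes H_{c_1}$ up to the choice of basis and companion-matrix convention, and that $[b_{ij}]^T = [b_{ij}^t]$ is itself a valid companion-matrix representation (this is the $GF(q^{\rho_1})$-algebra automorphism / anti-automorphism subtlety alluded to after \eqref{companion_matrix_TPC2}). Once that identification is pinned down, one must also confirm that "$[H_{c_2}]$ is a parity check matrix of $\psi(C_2)$" is consistent with the definition of $\psi(C_2)$ given in the text via $\psi(H_{c_2})$ — these agree because both are the binary expansion of $H_{c_2}$ relative to the fixed basis $\{1,\alpha,\dots,\alpha^{\rho_1-1}\}$. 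After that bookkeeping, the proof is the two-line Kronecker computation above, applied once for each of the two disjunctive hypotheses and once more for the Hermitian versions; I would state it for the Euclidean case in full and remark that the Hermitian case is verbatim with $T\mapsto\dagger$.
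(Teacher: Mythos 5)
There is a genuine gap in your second case. The expression $[H_{c_2}]\otimes H_{c_1}$ in (\ref{companion_matrix_TPC2}) is not a true Kronecker product of binary matrices: its blocks are the matrix products $[b_{ij}]H_{c_1}$ of a $\rho_1\times\rho_1$ companion matrix with the $\rho_1\times n_1$ matrix $H_{c_1}$, whereas a genuine Kronecker product of $[H_{c_2}]$ (which is $\rho_1\rho_2\times\rho_1 n_2$) with $H_{c_1}$ would have $\rho_1^2\rho_2$ rows. Consequently the mixed-product identity cannot be invoked; what you actually get is that the $(i,s)$ block of the product is $\sum_{j}[b_{ij}]\,(H_{c_1}H_{c_1}^T)\,[b_{sj}]^T$. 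When $H_{c_1}H_{c_1}^T=0$ this vanishes, so your first disjunct survives (by direct block computation rather than by the mixed-product rule). But under the $C_2$-side hypothesis alone, the factor $H_{c_1}H_{c_1}^T$ is trapped between companion matrices with which it need not commute, so it cannot be pulled out of the sum; this is exactly the obstruction that forces Theorem \ref{CSS_TPC2} of the paper to assume $H_{c_1}H_{c_1}^T$ is invertible and to settle for $\mathcal{C}_L^\perp\subseteq\mathcal{C}$ rather than $\mathcal{C}^\perp\subseteq\mathcal{C}$. In addition, your identification of the hypothesis is off: the parity check matrix of the subfield subcode $\psi(C_2)$ is $\psi(H_{c_2})$, a $\rho_1\rho_2\times n_2$ matrix obtained by expanding each symbol into a column $\rho_1$-tuple, not the $\rho_1\rho_2\times\rho_1 n_2$ companion-matrix lift $[H_{c_2}]$. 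Hence $\psi(C_2)^\perp\subseteq\psi(C_2)$ means $\psi(H_{c_2})\psi(H_{c_2})^T=0$, which is not the condition $[H_{c_2}][H_{c_2}]^T=0$ that your computation would require.

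The paper's proof avoids both problems by staying over the extension field. It writes $H_{\mathcal{C}}=H_{c_2}\otimes\psi^{-1}(H_{c_1})$; for the $C_1$-hypothesis it notes that each block $\psi\big(\beta_{ij}\psi^{-1}(H_{c_1})\big)$ equals $A_{ij}H_{c_1}$ for an invertible binary $A_{ij}$, so every block product reduces to $A_{ij}H_{c_1}H_{c_1}^TA_{st}^T$; for the $C_2$-hypothesis it permutes the columns of $H_{\mathcal{C}}$ into $\psi^{-1}(H_{c_1})\otimes H_{c_2}=[\alpha_iH_{c_2}]$ (column permutations do not change $HH^T$), so that each block becomes $B_i\,\psi(H_{c_2})$ with $B_i$ invertible and the hypothesis on $\psi(H_{c_2})$ applies directly. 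If you wish to retain your framework, you must add this column-permutation step (or an equivalent reduction to $\psi(H_{c_2})$) for the second disjunct; as written, that case and its Hermitian analogue are not established.
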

\begin{proof}
Let   $H_{c_1}=(a_{ij})_{\rho_1\times n_1}$ be the parity check matrix of $C_1$ with $GF(q)$ elements.  Then $\psi^{-1}(H_{c_1})= [ \alpha_1\ \alpha_2\ \ldots\ \alpha_{n_1}  ]$ is a $1\times n_1$ array over $GF(q^{\rho_1})$. Let $H_{c_2}=(\beta_{ij})_{\rho_2\times n_2}$ be the parity check matrix of $C_2$ with elements from the extension field $GF(q^{\rho_1})$. Then  the tensor product of $H_{c_2}$ and $\psi^{-1}(H_{c_1})$ is given by
\begin{eqnarray}
\label{Parity_Check_TPC}
H_{\mathcal{C}}&=&H_{c_2}\otimes\psi^{-1}(H_{c_1})\nonumber\\
&=&\big[\beta_{ij}[\begin{array}{cccc} \alpha_1&\alpha_2&\ldots&\alpha_{n_1} \end{array} ]\big]_{\rho_2\times n_1n_2},
\end{eqnarray}
and we have
\begin{eqnarray}
\label{Binary_Parity_Check_TPC}\psi(H_{\mathcal{C}})=\left[\psi\big(\beta_{ij}[\begin{array}{cccc} \alpha_1&\alpha_2&\ldots&\alpha_{n_1} \end{array} ]\big)\right]_{\rho_1\rho_2\times n_1n_2}.
\end{eqnarray}
\begin{itemize}
\item[i)]
Since $H_{c_1}$ is a parity check matrix of $C_1$ over $GF(q)$, then $\psi^{-1}(H_{c_1})$ is  a parity check matrix over $GF(q^{\rho_1})$. It is easy to see that $\beta_{ij}\psi^{-1}(H_{c_1})$ is also a parity check matrix of $C_1$ over $GF(q^{\rho_1})$, where $1\leq i\leq\rho_2$ and $1\leq j\leq n_2$. Therefore,  $\psi(\beta_{ij}\psi^{-1}(H_{c_1}))$ is  also a parity check matrix of $C_1$ over $GF(q)$. Then we have
$\psi(\beta_{ij}\psi^{-1}(H_{c_1}))=A_{ij}\cdot H_{c_1}$, where $A_{ij}$ is an invertible $\rho_1\times\rho_1$ matrix over $GF(q)$.
Therefore,  $\psi(\beta_{ij}\psi^{-1}(H_{c_1}))\psi(\beta_{st}\psi^{-1}(H_{c_1}))^T=A_{ij}H_{c_1}H_{c_1}^TA_{st}^T=0$ if and only if $H_{c_1}H_{c_1}^T=0$ for $1\leq i,s\leq \rho_2$ and $1\leq j,t\leq n_2$. It follows that $\psi(H_{\mathcal{C}})\psi(H_{\mathcal{C}})^T=0$.
\item[ii)]
We rearrange  columns of the parity check matrix (\ref{Parity_Check_TPC}) as follows
$H'_{\mathcal{C}} = \big[\alpha_{i}H_{c_2}\big]_{\rho_2\times n_1n_2} = \psi^{-1}(H_{c_1})\otimes H_{c_2}$.
It follows that $\psi(H'_{\mathcal{C}})=\left[\psi\big(\alpha_{i}H_{c_2}\big)\right]_{\rho_1\rho_2\times n_1n_2}$. Then we have  $\psi\big(\alpha_{i}H_{c_2}\big)=B_{i}\cdot \psi\big(H_{c_2}\big)$, where $B_{i}$ is an invertible $\rho_1\rho_2\times\rho_1\rho_2$ matrix over $GF(q)$. Therefore,  $\psi\big(\alpha_{i}H_{c_2}\big)\psi\big(\alpha_{i}H_{c_2}\big)^T=
B_{i}  \psi\big(H_{c_2}\big)\psi\big(H_{c_2}\big)^TB_{i}^T=0$ if and only if $\psi\big(H_{c_2}\big)\psi\big(H_{c_2}\big)^T=0$.
It follows that $\psi(H'_{\mathcal{C}})\psi(H'_{\mathcal{C}})^T=0$.  Since $H'_{\mathcal{C}} $ contains the same columns of $H_{\mathcal{C}}$ with different permutations of columns,   we also have $\psi(H _{\mathcal{C}})\psi(H _{\mathcal{C}})^T=0$.
\end{itemize}
Therefore, we have $\mathcal{C}^\perp\subseteq\mathcal{C}$.
If  $C_1^{\perp_h}\subseteq C_1$ (or $\psi(C_2)^{\perp_h}\subseteq\psi(C_2)$), then $\mathcal{C}^{\perp_h}\subseteq\mathcal{C}$ can be obtained similarly. \qed
\end{proof}

From  Lemma \ref{CSS_TPC1}, we know that if the component code $C_1$ is dual (or Hermitian dual) contained, then there are no further restrictions such as the dual-containing restrictions on the component code $C_2$. Therefore, we can combine an arbitrary   linear code over the extension field, such as an MDS code or a $q^{\rho_1}$-ary LDPC code, with  a dual-containing binary or quaternary linear code  to construct   QTPCs using the CSS construction or Hermitian construction.  On the other hand, if the component code $C_2$  satisfies $\psi(C_2)^\perp\subseteq\psi(C_2)$ (or $\psi(C_2)^{\perp_h}\subseteq\psi(C_2)$), then  $C_1$ can be chosen arbitrarily  without the dual-containing restriction.

\begin{theorem}
\label{CSS_TPC1_theorem}
Let $C_1=[n_1,k_1,d_1]_q$ be a $q$-ary linear code, and let $C_2=[n_2,k_2,d_2]_{q^{\rho_1}}$ be a linear code  over the extension field $GF(q^{\rho_1})$, and the numbers of   check symbols are $\rho_1=n_1-k_1$ and $\rho_2=n_2-k_2$, respectively. 
If $q=2$ and $C_1^\perp\subseteq C_1$  $($or $\psi(C_2)^\perp\subseteq\psi(C_2))$, then there exists a pure QTPC with parameters $\mathcal{Q}=[[n_1n_2, n_1n_2-2\rho_1\rho_2,\min\{d_1,d_2\}]]$. If $q=4$ and $C_1^{\perp_h}\subseteq C_1$  $($or $\psi(C_2)^{\perp_h}\subseteq\psi(C_2))$, then there also exists a pure QTPC with the same parameters.
\end{theorem}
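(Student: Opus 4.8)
The plan is to assemble the statement from three results already in hand: the parameters of the classical tensor product code (Lemma~\ref{MDOfTPC}), its self-orthogonality (Lemma~\ref{CSS_TPC1}), and the CSS and Hermitian constructions (Lemma~\ref{QECCs Constructions}). For $q=2$: by Lemma~\ref{MDOfTPC} the code $\mathcal{C}=C_2\otimes_H C_1$ is a binary $[n_1n_2,\,n_1n_2-\rho_1\rho_2,\,d_{\mathcal{C}}]$ code with $d_{\mathcal{C}}=\min\{d_1,d_2\}$, and by Lemma~\ref{CSS_TPC1} the hypothesis $C_1^\perp\subseteq C_1$ (or $\psi(C_2)^\perp\subseteq\psi(C_2)$) forces $\mathcal{C}^\perp\subseteq\mathcal{C}$. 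I would then apply the CSS construction, Lemma~\ref{QECCs Constructions}(1), to the single code $\mathcal{C}$ (taking $C_1=C_2=\mathcal{C}$, so that the required inclusion $C_2^\perp=\mathcal{C}^\perp\subseteq\mathcal{C}=C_1$ holds). The output is a QECC of length $n_1n_2$, dimension $k_1+k_2-n=2(n_1n_2-\rho_1\rho_2)-n_1n_2=n_1n_2-2\rho_1\rho_2$, pure to $\min\{d_1,d_2\}$, which is the claimed $[[n_1n_2,\,n_1n_2-2\rho_1\rho_2,\,\min\{d_1,d_2\}]]$ code.

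For $q=4$ the argument is identical with the Hermitian versions throughout: $\mathcal{C}$ is now a quaternary linear code with the same parameters, Lemma~\ref{CSS_TPC1} turns $C_1^{\perp_h}\subseteq C_1$ (or $\psi(C_2)^{\perp_h}\subseteq\psi(C_2)$) into $\mathcal{C}^{\perp_h}\subseteq\mathcal{C}$, and the Hermitian construction, Lemma~\ref{QECCs Constructions}(2), applied with $D=\mathcal{C}$ gives a QECC of dimension $2k-n=2(n_1n_2-\rho_1\rho_2)-n_1n_2=n_1n_2-2\rho_1\rho_2$, again pure to $d_{\mathcal{C}}=\min\{d_1,d_2\}$, so the parameters coincide with the binary case.

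The one point that deserves care is the distance/purity bookkeeping. Lemma~\ref{MDOfTPC} only pins $d_{\mathcal{C}}$ down as $\min\{d_1,d_2\}$ (in general merely a lower bound), whereas the CSS/Hermitian distance formula is computed over $\mathcal{C}\setminus\mathcal{C}^\perp$ (resp.\ $\mathcal{C}\setminus\mathcal{C}^{\perp_h}$), which could a priori be strictly larger; so what one genuinely gets---and the sense in which the stated parameters should be read---is that $\min\{d_1,d_2\}$ is a guaranteed lower bound on the quantum minimum distance together with purity up to that value. This is exactly what the unconditional ``pure to $\min\{d_1,d_2\}$'' clause of Lemma~\ref{QECCs Constructions} supplies, and I would make it explicit rather than assert equality of the two distances. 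Apart from this, the proof has no real obstacle: the dimension is a one-line identity and the self-orthogonality is entirely handled by Lemma~\ref{CSS_TPC1}.
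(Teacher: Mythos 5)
Your route is the same as the paper's up to the last step: form $\mathcal{C}=C_2\otimes_H C_1$, get $[n_1n_2,\,n_1n_2-\rho_1\rho_2,\,\min\{d_1,d_2\}]$ from Lemma~\ref{MDOfTPC}, get $\mathcal{C}^\perp\subseteq\mathcal{C}$ (resp.\ $\mathcal{C}^{\perp_h}\subseteq\mathcal{C}$) from Lemma~\ref{CSS_TPC1}, and apply the CSS (resp.\ Hermitian) construction to $\mathcal{C}$ against itself; the dimension count is identical. The gap is in your final paragraph: the theorem asserts a \emph{pure} code with \emph{exact} minimum distance $\min\{d_1,d_2\}$, and you stop at ``lower bound plus pure to $\min\{d_1,d_2\}$'' and propose to reinterpret the statement in that weaker sense. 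That is not what the theorem claims, and the stronger claim is actually provable.

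The missing idea is a lower bound on the minimum distance of $\mathcal{C}^\perp$. The dual of the tensor product code has generator matrix $[H_{c_2}]\otimes H_{c_1}$ and is therefore a concatenated code with inner code $C_1^\perp$ and outer code $C_2^\perp$ (this is the equivalence from \cite{maucher2000equivalence}), so $d(\mathcal{C}^\perp)=d_1^\perp d_2^\perp$. Under the hypothesis $C_1^\perp\subseteq C_1$ one has $d_1^\perp\ge d_1$, hence $d_1^\perp d_2^\perp\ge 2d_1>\min\{d_1,d_2\}$ provided $d_1^\perp,d_2^\perp>1$ (and symmetrically in the other case, and likewise for the Hermitian versions). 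This puts you in the situation of the final clause of Lemma~\ref{QECCs Constructions}: the relevant dual codes have minimum distance exceeding $\min\{d_1,d_2\}$, so the QECC is pure and its minimum distance equals $\min\{d_1,d_2\}$ exactly. Adding this computation closes the gap; without it your argument only establishes $[[n_1n_2,\,n_1n_2-2\rho_1\rho_2,\,\ge\min\{d_1,d_2\}]]$ pure to $\min\{d_1,d_2\}$, which is weaker than the stated result.
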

\begin{proof}
Let $\mathcal{C}=C_2\otimes_H C_1$ be the tensor product code of $C_1$ and $C_2$. If $q=2$ and if $C_1^\perp\subseteq C_1$ or $\psi(C_2)^\perp\subseteq\psi(C_2)$, then $\mathcal{C}^{\perp}\subseteq\mathcal{C}$  by Lemma \ref{CSS_TPC1}. Combining Lemma
\ref{QECCs Constructions} and Lemma \ref{MDOfTPC}, we know that there exists a QTPC with parameters $\mathcal{Q}=[[n_1n_2, n_1n_2-2\rho_1\rho_2,\geq\min\{d_1,d_2\}]]$. Let $H_{c_1}$  and $H_{c_2}$ be the parity check matrices  of   $C_1$ and $C_2$,  respectively, then $\mathcal{C}^{\bot}$ has a generator matrix $[H_{c_2}]\otimes H_{c_1}$, and $\mathcal{C}^{\bot}$ is a concatenated code with $C_1^{\bot}$ as the inner code and with $C_2^{\bot}$ as the outer code from \cite{maucher2000equivalence}. Therefore, $\mathcal{C}^{\bot}$ has minimum distance $d^{\bot}=d_1^{\bot}d_2^{\bot}$, where $d_1^{\bot}$ and $d_2^{\bot}$ are the dual minimum distances of $C_1$ and $C_2$, respectively. Since $C_1^\perp\subseteq C_1$ or $\psi(C_2)^\perp\subseteq\psi(C_2)$, there must be $d^{\bot}>\min\{d_1,d_2\}$, if $d_1^{\bot}, d_2^{\bot} >1$. Hence $\mathcal{Q}$ is pure and has exact minimum distance equal to $\min\{d_1,d_2\}$.

If $q=4$ and if  $C_1^{\perp_h}\subseteq C_1$ or $\psi(C_2)^{\perp_h}\subseteq\psi(C_2)$, then $\mathcal{C}^{\perp_h}\subseteq\mathcal{C}$  by Lemma \ref{CSS_TPC1}. We can obtain a pure QTPC with parameters $\mathcal{Q}=[[n_1n_2, n_1n_2-2\rho_1\rho_2,\min\{d_1,d_2\}]]$ by using the Hermitian construction.  \qed
\end{proof}

\newcommand{\tabincell}[2]{\begin{tabular}{@{}#1@{}}#2\end{tabular}}
\begin{table}
\renewcommand{\arraystretch}{1.3}
\setlength{\tabcolsep}{5pt}
\tcaption{Comparisons between QTPCs  and CQCs, where QTPCs are constructed based on dual-containing BCH codes and MDS codes, and CQCs are obtained by Ref. \cite{knill1996concatenated,grassl2009generalized}.}
\label{Comparisons Between QTPCs and CQCs I}
\centering
\footnotesize
\begin{tabular}[c]{|c|c|c|c|l|l|l|l|}
\hline
$m$&$n_1$&$n_2$&$\delta_1$&$\eta_1$&$\eta_2$&\multicolumn{1}{c|}{QTPCs $\mathcal{Q}$}&\multicolumn{1}{c|}{CQCs $C_Q$ } \\
\hline
$5$&$31$&$23 \leq n_2\leq 2^{15}+1 $&$7$&$2$&$3$&$[[31n_2,31n_2-180,7]]$&$ [[31n_2,28n_2-112,6]]$ \\
$5$&$31$&$13 \leq n_2\leq 2^{15}+1 $&$6$&$2$&$3$&$ [[31n_2,31n_2-150,6]]$&$ [[31n_2,28n_2-112,6]]$ \\
$5$&$31$&$8  \leq n_2\leq 2^{10}+1 $&$5$&$2$&$2$&$ [[31n_2,31n_2-80,5]]$&$ [[31n_2,28n_2-56,4]]$ \\
$5$&$31$&$2\leq n_2\leq 2^{10}+1 $&$4$&$2$&$2$&$ [[31n_2,31n_2-60,4]]$&$ [[31n_2,28n_2-56,4]]$ \\
\hline
$6$&$63$&$7 \leq n_2\leq 4^{15}+1 $&$7$&$2$&$3$&$ [[63n_2,63n_2-180,7]]$&$ [[63n_2,60n_2-240,6]]$ \\
$6$&$63$&$6 \leq n_2\leq 4^{15}+1 $&$6$&$2$&$3$&$ [[63n_2,63n_2-150,6]]$&$ [[63n_2,60n_2-240,6]]$ \\
$6$&$63$&$5 \leq n_2\leq 4^{9}+1 $&$5$&$2$&$2$&$ [[63n_2,63n_2-72,5]]$&$ [[63n_2,60n_2-120,4]]$ \\
$6$&$63$&$4 \leq n_2\leq 4^{9}+1 $&$4$&$2$&$2$&$ [[63n_2,63n_2-54,4]]$&$ [[63n_2,60n_2-120,4]]$ \\
\hline
$7$&$127$&$34 \leq n_2\leq 2^{49}+1 $&$15$&$3$&$5$&$ [[127n_2,127n_2-1372,15]]$&$ [[127n_2,113n_2-904,15]]$ \\
$7$&$127$&$14 \leq n_2\leq 2^{49}+1 $&$14$&$2$&$7$ &$ [[127n_2,127n_2-1274,14]]$&$ [[127n_2,124n_2-1488,14]]$\\
$7$&$127$& $13 \leq n_2\leq 2^{42}+1$ &$13$&$2$&$6$ &$ [[127n_2,127n_2-1008,13]]$&$ [[127n_2,124n_2-1240,12]]$\\
$7$&$127$& $18 \leq n_2\leq 2^{42}+1$ &$12$&$3$&$4$ &$ [[127n_2,127n_2-924,12]]$&$ [[127n_2,113n_2-678,12]]$\\
$7$&$127$& $11 \leq n_2\leq 2^{35}+1$ &$11$&$2$&$5$ &$ [[127n_2,127n_2-700,11]]$&$ [[127n_2,124n_2-992,10]]$\\
$7$&$127$& $10 \leq n_2\leq 2^{35}+1$ &$10$&$2$&$5$ &$ [[127n_2,127n_2-630,10]]$&$ [[127n_2,124n_2-992,10]]$\\
\hline
\end{tabular}
\end{table}

If we use the companion matrix representation to represent the parity check matrix of a binary TPC, we can get the following result.

\begin{theorem}
\label{CSS_TPC2}
Let $H_{c_1}$ be the parity check matrix of a binary linear code $C_1=[n_1,k_1,d_1] $, and let $H_{c_2}$ be the parity check matrix of a linear code $C_2=[n_2,k_2,d_2]_{2^{\rho_1}}$ over the extension field $GF(2^{\rho_1})$, where $\rho_1=n_1-k_1$ is the number of check symbols of $C_1$.
 If the component code $C_2$ satisfies $C_2^\bot\subseteq C_2$, and $H_{c_1}H_{c_1}^T$ is of full rank, then
  there exists a pure QTPC with parameters $\mathcal{Q}=[[n_1n_2, n_1n_2-2\rho_1\rho_2,\min\{d_1,d_2\}]]$.
\end{theorem}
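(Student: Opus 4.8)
The plan is to follow the strategy behind Lemma~\ref{CSS_TPC1}, but with one important modification: here $H_{c_1}H_{c_1}^T$ is at the opposite extreme from the dual-containing case, namely it is \emph{invertible}, so no single tensor product code built from $C_1$ and $C_2$ is self-orthogonal. Instead I would produce a \emph{pair} of tensor product codes of $C_1$ and $C_2$ and feed them to the CSS construction of Lemma~\ref{QECCs Constructions}, part~1. Concretely, put $R:=H_{c_1}H_{c_1}^T$, a symmetric $\rho_1\times\rho_1$ binary matrix, invertible by hypothesis, and set $H'_{c_1}:=R^{-1}H_{c_1}$. Since $R^{-1}$ is invertible, $H'_{c_1}$ is again a parity check matrix of $C_1$, and $H_{c_1}(H'_{c_1})^T=H_{c_1}H_{c_1}^TR^{-1}=I_{\rho_1}$, using $R^T=R$.

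Next I would let $\mathcal{C}_X$ be the binary TPC with parity check matrix $H_X:=[H_{c_2}]\otimes H_{c_1}$ in the companion form (\ref{companion_matrix_TPC2}), and $\mathcal{C}_Z$ the binary TPC with parity check matrix $H_Z:=[H_{c_2}^t]\otimes H'_{c_1}$ in the transposed companion form (\ref{companion_matrix_TPC}), but with $H'_{c_1}$ in place of $H_{c_1}$. The crux is to verify $H_XH_Z^T=0$. Computing block by block, its $(s,i)$ block equals $\sum_{j=1}^{n_2}\big([b_{sj}]H_{c_1}\big)\big([b_{ij}^t]H'_{c_1}\big)^T=\sum_{j}[b_{sj}]\,H_{c_1}(H'_{c_1})^T\,[b_{ij}]=\sum_{j}[b_{sj}][b_{ij}]$, where I used that transposing $H_Z$ turns $[b_{ij}^t]$ back into $[b_{ij}]$ and that $H_{c_1}(H'_{c_1})^T=I_{\rho_1}$. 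Because $\beta\mapsto[\beta]$ is an additive, multiplicative embedding of $GF(2^{\rho_1})$ into $M_{\rho_1}(GF(2))$ (see (\ref{companion_matrix})), this equals $[\sum_j b_{sj}b_{ij}]=[(H_{c_2}H_{c_2}^T)_{si}]$. Since $C_2^\perp\subseteq C_2$ is equivalent to $H_{c_2}H_{c_2}^T=0$, every block is $[0]=0$, hence $H_XH_Z^T=0$; equivalently $\mathcal{C}_Z^\perp\subseteq\mathcal{C}_X$, which is precisely the CSS hypothesis. I expect this step to be the only real obstacle — one has to hit upon the right twist $H'_{c_1}=R^{-1}H_{c_1}$ together with the right pairing (non-transposed companions for $H_X$, transposed for $H_Z$) so that the companion matrices telescope through $[b_{sj}][b_{ij}]=[b_{sj}b_{ij}]$; everything else is bookkeeping parallel to Lemma~\ref{CSS_TPC1}.

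It then remains to read off the parameters. By Lemma~\ref{MDOfTPC}, $\mathcal{C}_X=C_2\otimes_HC_1$ has parameters $[n_1n_2,\,n_1n_2-\rho_1\rho_2,\,\min\{d_1,d_2\}]$; and $\mathcal{C}_Z$ is again a tensor product code of $C_1$ (with parity check $H'_{c_1}$) and $C_2$ (realized via the transposed companion matrices, i.e.\ with the symbols of $C_2$ written in the trace-dual basis of $GF(2^{\rho_1})$), so it has the same parameters. Applying Lemma~\ref{QECCs Constructions} to $\mathcal{C}_X$ and $\mathcal{C}_Z$ then yields a QECC with parameters $[[n_1n_2,\,2(n_1n_2-\rho_1\rho_2)-n_1n_2,\,d]]=[[n_1n_2,\,n_1n_2-2\rho_1\rho_2,\,d]]$ that is pure to $\min\{d(\mathcal{C}_X),d(\mathcal{C}_Z)\}=\min\{d_1,d_2\}$. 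Finally, for exactness I would argue as in the proof of Theorem~\ref{CSS_TPC1_theorem}: the rows of $H_X$ (resp.\ of $H_Z$) generate a concatenated code with inner code $C_1^\perp$ and outer code $C_2^\perp$, of minimum distance $d_1^\perp d_2^\perp$; since $C_2^\perp\subseteq C_2$ forces $d_2^\perp\ge d_2$, this distance exceeds $\min\{d_1,d_2\}$ whenever $d_1^\perp,d_2^\perp>1$, so the resulting QTPC is pure and has exact minimum distance $\min\{d_1,d_2\}$.
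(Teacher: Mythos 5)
Your proposal is correct and follows essentially the same route as the paper: you twist the inner parity check matrix by the inverse of $H_{c_1}H_{c_1}^T$ (the paper's $L$ with $LS=I_{\rho_1}$ is exactly your $R^{-1}$), pair the transposed and untransposed companion-matrix forms of $[H_{c_2}]$ so that the blocks telescope to $\sum_j[b_{sj}][b_{ij}]=[(H_{c_2}H_{c_2}^T)_{si}]=0$, and then apply the CSS construction to the two resulting TPCs, with the same concatenated-code argument for purity. The only differences are cosmetic (which of the two codes carries the twisted matrix, and your explicit use of the symmetry of $R$).
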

\begin{proof}
Let $\mathcal{C}=C_2\otimes C_1$ be the tensor product code of $C_1$ and $C_2$, and let $H_\mathcal{[C]}$ in (\ref{companion_matrix_TPC}) be a parity check matrix of $\mathcal{C}$.
Denote the product of $H_{c_1}$ and $H_{c_1}^T$ by $S=H_{c_1}H_{c_1}^T$. 
We know that there   exists an invertible matrix $L$   such that $LS $ is an identity matrix, i.e.,
$LS  =I_{\rho_1}$,
where $I_{\rho_1}$ is a $\rho_1\times \rho_1$ identity matrix. It follows that $LH_{c_1}H_{c_1}^T= LS =  I_{\rho_1}$.
Let $\mathcal{C}_L$ be a TPC which has the same parameters with $\mathcal{C}$ by replacing  the component matrix $H_{c_1}$ in  $H_\mathcal{[C]}$ with $LH_{c_1}$.  Denote by   $H_{[\mathcal{C}_L]}$   the parity check matrix of   $\mathcal{C}_L$ and we use (\ref{companion_matrix_TPC2}) with untransposed companion matrices as the parity check matrix  of $\mathcal{C}_L$.
Then  $\mathcal{C}$ and $\mathcal{C}_L$ have the same parameters and error control abilities but have    different parity  check matrix structures.
  If $C_2^{\bot}\subseteq C_2$, then $H_{c_2}H_{c_2}^T=0$.  It is easy to very   that this is equal to $[H_{c_2}][H_{c_2}^t]^T=0$.  It follows that $\sum_{k=1}^{n_2}[b_{ik}][b_{jk}]=0$ for all $1\leq i,j\leq \rho_2$. Then there is
\begin{eqnarray}
\nonumber
\sum_{k=1}^{n_2}([b_{ik}]LH_{c_1})([b_{jk}^t]H_{c_1})^T &=&\sum_{k=1}^{n_2}[b_{ik}]LH_{c_1}H_{c_1}^T[b_{jk}] \\
\nonumber
&=&\sum_{k=1}^{n_2}[b_{ik}] [b_{jk}] \\
&=&0
\end{eqnarray}
for all $1\leq i,j\leq \rho_2$. Therefore, we have
 $H_{[\mathcal{C}_L]}H_{[\mathcal{C}]}^T=0$  which follows that $\mathcal{C}_L^\perp\subseteq\mathcal{C} $.

By combining Lemma \ref{QECCs Constructions} and Lemma \ref{MDOfTPC}, we know that there exists a QTPC with parameters $\mathcal{Q}=[[n_1n_2, n_1n_2-2\rho_1\rho_2,\geq\min\{d_1,d_2\}]]$. Notice that $d_{\mathcal{C}_L^{\bot}}=d_{\mathcal{C}^{\perp}}
=d_{C_1^{\perp}}d_{C_2^{\perp}}>\min\{d_1,d_2\}$, thus $\mathcal{Q}$ is pure and has exact minimum distance $\min\{d_1,d_2\}$.
 \qed
\end{proof}

The  dual-containing restriction   on $C_2$ in Theorem  \ref{CSS_TPC1_theorem} is easy to be verified for RS codes, but for other nonbinary codes it becomes difficult to verify them.  In Theorem \ref{CSS_TPC2}, the restriction on $C_2$ is much easier to be satisfied than that in Theorem \ref{CSS_TPC1_theorem}, but one condition  is that the product matrix $H_{c_1}H_{c_1}^T$ needs to be a full rank.

\section{Code Constructions and Comparisons}
\label{Code Comparisons}
\noindent
In this section we   present several constructions of QTPCs based on Theorem \ref{CSS_TPC1_theorem} and  make   comparisons between QTPCs and other classes of QECCs. Firstly, we give two explicit examples to illustrate the construction of QTPCs.
\begin{table*}
\renewcommand{\arraystretch}{1.3}
\tcaption{Comparisons between QTPCs and CQCs derived from online code tables in \cite{Grassl:codetables}.}
\setlength{\tabcolsep}{5pt}
\label{Comparisons Between QTPCs and CQCs From Code Tables}
\centering
\footnotesize
\begin{tabular}[c]{|l|l|c|l|l|}
\hline
\multicolumn{1}{|c|}{\tabincell{c}{Component \\Codes $C_1$}} & \multicolumn{1}{c|}{\tabincell{c}{Inner Quantum\\Codes $Q_1$}}&$n_2$&\multicolumn{1}{c|}{QTPCs $\mathcal{Q}$} &\multicolumn{1}{c|}{CQCs $C_Q$ }\\
\hline
$[16,9,6]_4$&$[[16,14,2]]$&$8\leq n_2\leq 2^{14}+1$&$[[16n_2,16n_2-70,6]]$&
$[[16n_2,14n_2-56,6]]$\\
\hline
 $[18,9,8]_4$&$[[18,16,2]]$&$16\leq n_2\leq 2^{16}+1$&$[[18n_2,18n_2-126,8]]$&
$[[18n_2,16n_2-96,8]]$\\
\hline
 $[27,15,9]_4$&$[[27,20,3]]$&$17\leq n_2\leq 2^{20}+1$&$[[27n_2,27n_2-192,9]]$&
$[[27n_2,20n_2-80,9]]$\\
\hline
 $[28,15,10]_4$&$[[28,26,2]]$&$14\leq n_2\leq 2^{26}+1$&$[[28n_2,28n_2-234,10]]$&
$[[28n_2,26n_2-208,10]]$\\
\hline
 $[30,15,12]_4$&$[[30,23,3]]$&$28\leq n_2\leq 2^{23}+1$&$[[30n_2,30n_2-330,12]]$&
$[[30n_2,23n_2-138,12]]$\\
\hline
 $[44,22,14]_4$&$[[44,42,2]]$&$35\leq n_2\leq 2^{42}+1$&$[[44n_2,44n_2-572,14]]$&
$[[44n_2,42n_2-504,14]]$\\
\hline
 $[53,27,15]_4$&$[[53,45,3]]$&$47\leq n_2\leq 2^{45}+1$&$[[53n_2,53n_2-728,15]]$&
$[[53n_2,45n_2-360,15]]$\\
\hline
\end{tabular}
\end{table*}

\begin{examples}
Let $C_1=[5,3,3]_4$ be a Hermitian dual-containing code over $GF(4)$ in $\cite{calderbank1998quantum}$. Let $C_2=[n_2,n_2-2,3]_{16}$ be an arbitrary MDS code over $GF(16)$ in $\cite{macwilliams1981theory}$, where $3\leq n_2\leq 17$. Then the corresponding TPC $\mathcal{C} =C_2\otimes_H C_1=[5n_2,5n_2-4,3]_4$ is a Hermitian dual-containing code over $GF(4)$. Therefore, we can obtain a QTPC with parameters $\mathcal{Q}=[[5n_2,5n_2-8,3]]$  for $3\leq n_2\leq 17$. If $9\leq n_2\leq 17$,  then $\mathcal{Q}$ can attain the upper bound in  $\cite{Grassl:codetables}$.
\end{examples}

\begin{examples}
\label{QTPCs1}
Let $C_1=[n_1,1,n_1]$ be a binary repetition code. Let $C_2=[n_2,n_2-n_1+1,n_1]_{2^{\rho_1}}$ be  a narrow-sense RS code over $GF(2^{\rho_1})$, where $n_2=2^{\rho_1}-1$, $\rho_1=n_1-1$. Then the subfield subcode $\psi(C_2)$ of RS code $C_2$ is a narrow-sense BCH code $($see  $\cite{macwilliams1981theory})$.
If $n_1\leq2^{\lceil\frac{\rho_1}{2}\rceil}-1$, then $\psi(C_2)$ is dual contained according to $\cite{aly2007quantum}$. Therefore, there exists a QTPC with parameters $[[n_1n_2,n_1n_2-2(n_1-1)^2,  n_1]]$ by Theorem \ref{CSS_TPC1_theorem}.
 For example, we let $C_1=[9,1,9]$ and let $C_2=[255, 247,9]_{2^8}$ be  a narrow-sense
RS code over $GF(2^8)$, then there exists a QTPC with parameters $[[2295,2167, 9]]$. The rate is $0.94$. In order to construct a CQC with the
same length and minimum distance, we choose a binary QECC with parameters $[[15,9,3]]$ from $\cite{Grassl:codetables}$ as the inner QECC,
and choose a quantum MDS code over $GF(2^9)$ with parameters  $[[153,149,3]]$ as the outer QECC. Then there exists a CQC with
parameters $[[2295,1341,9]]$. The rate is $0.58$. It is shown that  the QTPC has a much higher code rate than the CQC.
\end{examples}

\subsection{\textbf{ QTPCs derived from dual-containing BCH codes}}
\noindent
In order to get larger numbers of comparable QTPCs and CQCs, we use BCH codes  as one of the component codes to construct QTPCs and use quantum BCH codes as the inner codes to construct CQCs. Here, we only consider the use of primitive, narrow-sense BCH codes and quantum  primitive, narrow-sense BCH codes, respectively.

Let $C_1$ be a binary primitive, narrow-sense BCH code with length $n_1=2^m-1$ and design distance $3\leq\delta_1\leq2^{\lceil m/2\rceil}-1$, and $m\geq 5$ is odd. According to  Ref. \cite{aly2007quantum}, we know that $C_1^{\bot}\subseteq C_1$ and $C_1$ has parameters $[n_1,n_1-\rho_1,\geq \delta_1]$, where the number of check symbols is $\rho_1=m\lceil1/2(\delta_1-1)\rceil$. Let $C_2=[n_2,n_2-\delta_1+1,\delta_1]_{2^{\rho_1}}$ be an MDS code over the extension field $GF(2^{\rho_1})$ and $\delta_1\leq n_2\leq 2^{\rho_1}+1$.  Then, there exists a QTPC with parameters $\mathcal{Q}=[[n_1n_2,n_1n_2-2\rho_1(\delta_1-1),\delta_1]]$ by Theorem \ref{CSS_TPC1_theorem}.

If $\delta_1\geq4$, then we let  $\delta_1=\eta_1\eta_2$ or $\delta_1-1=\eta_1\eta_2$, where $2\leq\eta_1\leq\eta_2$. Correspondingly, we construct a CQC which has the same length and similar minimum distance lower bound with the QTPC above. If $\eta_1=2$,   we let $D_1=[[n_1,K_1,2]]$ be an optimal stabilizer code with dimension $K_1=n_1-3$ according to Ref. \cite{rains1999quantum}. If $\eta_1>2$, we let $D_1=[[n_1,K_1,\geq \eta_1]]$ be a quantum BCH code which has  dimension  $K_1=n_1-2m\lceil1/2(\eta_1-1)\rceil$ and    design distance $\eta_1$  by Theorem 21 in \cite{aly2007quantum}. We let $D_2=[[n_2,n_2-2\eta_2+2,\eta_2]]_{2^{K_1}}$ be a quantum MDS  code over the extension field $GF(2^{K_1})$ and $\eta_2\leq n_2\leq 2^{K_1}+1$ according to Ref. \cite{rotteler2004quantum}. Then we have $\delta_1\leq n_2\leq\min\{2^{\rho_1}+1,2^{K_1}+1\}=2^{\rho_1}+1$. Hence, there exist a CQC with parameters $C_Q=[[n_1n_2,K_1(n_2-2\eta_2+2),\eta_1\eta_2]]$.
 It is easy to verify that $K_1(n_2-2\eta_2+2)< n_1n_2-2\rho_1(\delta_1-1)$ if $n_2\geq \lceil(1-\frac{2}{m})n_1\rceil$. Thus, QTPCs have larger dimension than     CQCs if $\lceil(1-\frac{2}{m})n_1\rceil\leq n_2\leq 2^{\rho_1}+1$. In order to guarantee that QTPCs  have larger dimension than the CQCs, we make a loose estimation of lower bound of $n_2$ here.

 If $m\geq 4$ is even and $C_1$ is a  quaternary  primitive, narrow-sense BCH code with length  $n_1=4^{m/2}-1$, similar results can be obtained by using the Hermitian construction.  In Table \ref{Comparisons Between QTPCs and CQCs I}, we compute and compare the parameters of QTPCs constructed based on  dual-containing BCH codes and those of  CQCs. It is easy to see that all the QTPCs in Table   \ref{Comparisons Between QTPCs and CQCs I} have parameters better than the CQCs, and when code length $n_2$ becomes larger, QTPCs have much larger dimension than CQCs with the same length and minimum distance. Notice that the lower bound of $n_2$  in Table \ref{Comparisons Between QTPCs and CQCs I} is much smaller than $\lceil(1-\frac{2}{m})n_1\rceil$.

\begin{table*}
\renewcommand{\arraystretch}{1.6}
\setlength{\tabcolsep}{4pt}
\tcaption{Comparisons between QTPCs and quantum BCH codes. The QPTCs have component codes $C_1$ and $C_2$, where   $C_1$ is chosen from online code tables in \cite{Grassl:codetables} and the MAGMA database, and $C_2$ is an MDS code over the extension field from \cite{macwilliams1981theory}. The quantum BCH codes are derived from Ref. \cite{aly2007quantum,ling2010generalization}.}
\label{Comparisons Between QTPCs and CQCs II}
\centering
\tiny
\begin{tabular}[c]{|l|l|l|l|}
\hline
\multicolumn{1}{|c|}{\tabincell{c}{Component \\Codes $C_1$}}&\multicolumn{1}{c|}{\tabincell{c}{Component \\Codes $C_2$}}&\multicolumn{1}{c|}{QTPCs $\mathcal{Q}$ \&  Rates $R$ }&\multicolumn{1}{c|}{Quantum BCH codes $Q_B$ \&  Rates $ R$ }\\
\hline
$[6,1,6]_4$&$[341,337,5]_{4^5}$& $ [[2^{11}-2,2^{11}-42,5]]$, $R\approx0.97949$ &$ [[2^{11}-1,2^{11}-45,\geq5]]$, $R\approx0.97851$ \\
$[6,1,6]_4$&$[1023,1018,6]_{4^5}$& $ [[6138,6088,6]]$, $R\approx0.99185$ &$ [[6223,6139,\geq5]]$, $R\approx0.9865$ \\
\hline
$[11,6,5]_4$&$[744,740,5]_{4^5}$& $ [[2^{13}-8,2^{13}-48,5]]$, $R\approx0.99511$ &$ [[2^{13}-1,2^{13}-53,\geq5]]$, $R\approx0.99365$ \\
$[11,6,5]_4$&$[497,493,5]_{4^5}$& $ [[\frac{2^{14}-1}{3}+6,\frac{2^{14}-1}{3}-34,5]]$, $R\approx0.99268$ &$ [[\frac{2^{14}-1}{3},\frac{2^{14}-1}{3}-42,\geq5]]$, $R\approx0.99231$ \\
$[11,6,5]_4$&$[425,493,5]_{4^5}$& $ [[\frac{2^{15}-1}{7}-6,\frac{2^{15}-1}{7}-46,5]]$, $R\approx0.99017$ &$ [[\frac{2^{15}-1}{7},\frac{2^{15}-1}{7}-60,\geq5]]$, $R\approx0.98718$ \\
$[11,6,5]_4$&$[397,393,5]_{4^5}$& $ [[\frac{2^{16}-1}{15}-2,\frac{2^{16}-1}{15}-42,5]]$, $R\approx0.99084$ &$ [[\frac{2^{16}-1}{15},\frac{2^{16}-1}{15}-48,\geq5]]$, $R\approx0.98901$ \\
$[11,6,5]_4$&$[350,346,5]_{4^5}$& $ [[\frac{2^{16}-1}{17}-5,\frac{2^{16}-1}{17}-45,5]]$, $R\approx0.98961$ &$ [[\frac{2^{16}-1}{17},\frac{2^{16}-1}{17}-48,\geq5]]$, $R\approx0.98755$ \\
\hline
$[12,6,6]_4$&$[2731,2726,6]_{4^6}$& $ [[2^{15}+5,2^{15}-55,6]]$, $R\approx0.99817$ &$ [[2^{15}-1,2^{15}-61,\geq5]]$, $R\approx0.99817$ \\
\hline
$[14,8,5]_4$&$[75,71,5]_{4^6}$& $ [[\frac{2^{15}-1}{31}-7,\frac{2^{15}-1}{31}-55,5]]$, $R\approx0.95429$ &$ [[\frac{2^{15}-1}{31},\frac{2^{15}-1}{31}-60,\geq5]]$, $R\approx0.94324$ \\
$[14,8,5]_4$&$[334,330,5]_{4^6}$& $ [[\frac{2^{15}-1}{7}-5,\frac{2^{15}-1}{7}-53,5]]$, $R\approx0.98973$ &$ [[\frac{2^{15}-1}{7},\frac{2^{15}-1}{7}-60,\geq5]]$, $R\approx0.98718$ \\
$[14,8,5]_4$&$[2340,2336,5]_{4^6}$& $ [[2^{15}-8,2^{15}-56,5]]$, $R\approx0.99853$ &$ [[2^{15}-1,2^{15}-61,\geq5]]$, $R\approx0.99817$ \\
\hline

$[17,9,7]_4$&$[663,659,7]_{4^8}$& $ [[\frac{2^{20}-1}{93}-4,\frac{2^{20}-1}{93}-100,7]]$, $R\approx0.99148$ &$ [[\frac{2^{20}-1}{93},\frac{2^{20}-1}{93}-100,\geq7]]$, $R\approx0.99113$ \\

$[17,9,7]_4$&$[823,816,7]_{4^8}$& $ [[\frac{2^{20}-1}{75}+10,\frac{2^{20}-1}{75}-86,7]]$, $R\approx0.99314$ &$ [[\frac{2^{20}-1}{75},\frac{2^{20}-1}{75}-100,\geq7]]$, $R\approx0.99285$ \\

$[17,9,7]_4$&$[1869,1863,7]_{4^8}$& $ [[\frac{2^{20}-1}{33}-2,\frac{2^{20}-1}{33}-98,7]]$, $R\approx0.99692$ &$ [[\frac{2^{20}-1}{33},\frac{2^{20}-1}{33}-100,\geq7]]$, $R\approx0.99685$ \\

$[17,9,7]_4$&$[7710,7704,7]_{4^8}$& $ [[2^{20}+1,2^{20}-95,7]]$, $R\approx0.99991$ &$ [[2^{20}-1,2^{20}-101,\geq7]]$, $R\approx0.9999$ \\

$[17,9,7]_4$&$[971,965,7]_{4^8}$& $ [[\frac{2^{21}-1}{127}-6,\frac{2^{21}-1}{127}-102,7]]$, $R\approx0.99418$ &$ [[\frac{2^{21}-1}{127},\frac{2^{21}-1}{127}-105,\geq7]]$, $R\approx0.99364$ \\

$[17,9,7]_4$&$[2518,2512,7]_{4^8}$& $ [[\frac{2^{21}-1}{49}+7,\frac{2^{21}-1}{49}-89,7]]$, $R\approx0.99776$ &$ [[\frac{2^{21}-1}{49},\frac{2^{21}-1}{49}-105,\geq7]]$, $R\approx0.99755$ \\

$[17,9,7]_4$&$[17623,17617,7]_{4^8}$& $ [[\frac{2^{21}-1}{7}-2,\frac{2^{21}-1}{7}-98,7]]$, $R\approx0.99968$ &$ [[\frac{2^{21}-1}{7},\frac{2^{21}-1}{7}-105,\geq7]]$, $R\approx0.99965$ \\
\hline
$[27,15,9]_4$&$[9519,9511,9]_{4^{12}}$ & $ [[\frac{2^{29}-1}{2089}+14,\frac{2^{29}-1}{2089}-178,9]]$, $R\approx0.99925$ &$ [[\frac{2^{29}-1}{2089},\frac{2^{29}-1}{2089}-203,\geq9]]$, $R\approx0.99921$ \\
$[27,15,9]_4$&$[18027,18019,9]_{4^{12}}$& $ [[\frac{2^{29}-1}{1103}-8,\frac{2^{29}-1}{1103}-200,9]]$, $R\approx0.99961$ &$ [[\frac{2^{29}-1}{1103},\frac{2^{29}-1}{1103}-203,\geq9]]$, $R\approx0.99958$ \\
$[27,15,9]_4$&$[85340,85332,9]_{4^{12}}$& $ [[\frac{2^{29}-1}{233}+13,\frac{2^{29}-1}{233}-179,9]]$, $R\approx0.99992$ &$ [[\frac{2^{29}-1}{233},\frac{2^{29}-1}{233}-203,\geq9]]$, $R\approx0.99991$ \\
\hline
\end{tabular}
\end{table*}

\subsection{\textbf{QTPCs with component codes derived from online code tables and MAGMA}}
\noindent
The online code tables in \cite{Grassl:codetables} provide bounds on the parameters of classical linear codes and additive quantum codes. Almost all of those codes are Best Known Linear Codes (BKLC) or Best Known Quantum Codes (BKQC). Therefore, we can construct QTPCs with component codes $C_1$  chosen from the code tables in \cite{Grassl:codetables}.    From Theorem \ref{additive_quantum_codes}, we know that each  additive code $\mathbf{Q}=[[n,k]]$ in \cite{Grassl:codetables} corresponds to a classical  additive code $\mathbf{D}=[n,(n-k)/2]_4$ that is self-orthogonal with respect to the  trace-Hermitian inner product over $GF(4)$, i.e., $\mathbf{D}\subseteq \mathbf{D}^{\bot_{th}}$. In particular, if code $\mathbf{D}$ is linear over $GF(4)$,  then  $\mathbf{D}^{\bot_{th}}$ is equal to the Hermitian dual code  of code $\mathbf{D}$, i.e., $\mathbf{D}^{\bot_{th}}=\mathbf{D}^{\bot_{h}}$. Let $\mathbf{C}=\mathbf{D}^{\bot_{h}}$, then $\mathbf{C}$ is a Hermitian dual-containing code, i.e., $\mathbf{C}^{\bot_{h}}\subseteq \mathbf{C}$. Furthermore, if $\mathbf{Q}$ is pure and has minimum distance $d$, then $\mathbf{C}$ has minimum distance $d$.

In Table \ref{Comparisons Between QTPCs and CQCs From Code Tables}-\ref{Comparisons Between QTPCs and QECCs III}, we construct many QTPCs based on the code tables in \cite{Grassl:codetables} and MAGMA \cite{cannon2008handbook} (Version 2.21-8, online).  We first choose a BKQC $\mathbf{Q}$ with length less than or equal to $50$ from \cite{Grassl:codetables} or the MAGMA database, then we can get the corresponding classical additive code  $\mathbf{D}$ over $GF(4)$. By using MAGMA, we can determine that if quantum code $\mathbf{Q}$ is pure and if code $\mathbf{D}$ is linear over $GF(4)$. If  quantum code $\mathbf{Q}$ is pure and code $\mathbf{D}$ is linear over $GF(4)$, then we get a Hermitian dual-containing code $\mathbf{C}=\mathbf{D}^{\bot_h}$ which has the same minimum distance with $\mathbf{Q}$.  All the component codes $C_1$ in Table \ref{Comparisons Between QTPCs and CQCs From Code Tables}-\ref{Comparisons Between QTPCs and QECCs III}  except   $C_1=[6,1,6]_4$, $C_1=[11,6,5]$ and $C_1=[12,6,6]$ in Table  \ref{Comparisons Between QTPCs and CQCs II} and Table \ref{Comparisons Between QTPCs and QECCs III} are trace-Hermitian dual-containing additive codes that correspond  to   BKQCs $\mathbf{Q}$ in \cite{Grassl:codetables} and the MAGMA database. By  using MAGMA, we know that they happen to be linear   and Hermitian dual-containing codes over $GF(4)$, and have the same minimum distance  with the corresponding BKQCs $\mathbf{Q}$. The codes $C_1=[11,6,5]_4$ and $C_1=[12,6,6]$   in Table \ref{Comparisons Between QTPCs and CQCs II} and Table \ref{Comparisons Between QTPCs and QECCs III} are    BKLCs in the database of MAGMA and it is easy to verify that they are also Hermitian  dual-containing codes. All the component codes $C_2$ used for the construction of QTPCs in Table \ref{Comparisons Between QTPCs and CQCs From Code Tables}-\ref{Comparisons Between QTPCs and QECCs III}  are MDS codes over the extension field by
Ref. \cite{macwilliams1981theory}. For the component codes $C_2=[341,337,5]_{4^5}$ and $C_2=[1023,1018,6]_{4^5}$ in Table \ref{Comparisons Between QTPCs and CQCs II}, it is easy to verify that
$\psi(C_2)^{\bot_h}\subseteq \psi(C_2)$ according to \cite{aly2007quantum}. In Table \ref{Comparisons Between QTPCs and CQCs From Code Tables}, we use quantum MDS codes from \cite{rotteler2004quantum} as the outer QECCs  of CQCs. Therefore, we can obtain a lot of QTPCs from Theorem \ref{CSS_TPC1_theorem}.

All the QTPCs in Table \ref{Comparisons Between QTPCs and CQCs From Code Tables} have better parameters than CQCs with the same length and minimum distance, and when the component code length $n_2$ becomes larger, QTPCs have much higher dimension than CQCs with the same length and minimum distance.
All the  QTPCs in  Table \ref{Comparisons Between QTPCs and CQCs II} except  the second one  have exactly similar code length and minimum distance with quantum BCH codes in \cite{aly2007quantum,ling2010generalization}, but have higher code rates. The second QTPC in Table \ref{Comparisons Between QTPCs and CQCs II} has a larger difference of code length with the quantum BCH code, but have a higher code rate and larger minimum distance, then we still say that it is better than the quantum BCH code.  All the QTPCs in Table \ref{Comparisons Between QTPCs and QECCs III} have very similar code length  and the same minimum distance with the comparable QECCs in \cite{li2008binary}, but have larger dimension.  Therefore, all the QTPCs in Table \ref{Comparisons Between QTPCs and CQCs II} and \ref{Comparisons Between QTPCs and QECCs III} have   parameters   better than the ones available in the previous literature.

\begin{table*}
\renewcommand{\arraystretch}{1.3}
\setlength{\tabcolsep}{2pt}
\tcaption{Comparisons between QTPCs and QECCs in Ref. \cite{li2008binary}. The QPTCs have component codes $C_1$ and $C_2$, where  $C_1$ is chosen from online code tables in \cite{Grassl:codetables} and the MAGMA Database, and   $C_2$ is an MDS Code  over the extension field from \cite{macwilliams1981theory}. According to \cite{li2008binary}, $N_1(r)=2(2^r-1)/3$, $N_2(r)=6(2^r-1))/7$, and $N_3(r)=4(2^r-1)/5$.}
\label{Comparisons Between QTPCs and QECCs III}
\centering
\tiny
\begin{tabular}[c]{|c|l|c|l|l|}
\hline
$r$&\tabincell{c}{Component \\Codes $C_1$}&\tabincell{c}{Component \\Codes $C_2$}&\multicolumn{1}{c|}{QTPCs $\mathcal{Q}$ }&  \multicolumn{1}{c|}{QECCs $Q$ in Ref. \cite{li2008binary} }   \\
\hline
10& $[11,6,5]_4$& $[155,151,5]_{4^5}$& $ [[ 2^r+N_1(r)-1,2^r+N_1(r)-41,5]]$ &$ [[2^r+N_1(r),2^r+N_1(r)-42,5]]$ \\
\hline
10& $[11,6,5]_4$& $[124,120,5]_{4^5}$& $ [[ 2N_1(r),2N_1(r)-40,5]]$ &$ [[2N_1(r),2N_1(r)-42,5]]$ \\
\hline
12& $[11,6,5]_4$& \tabincell{c}{$[n_2,n_2-4,5]_{4^5}$,\\$n_2=\lceil\frac{2^t+N_i(r)}{11}\rceil$, or\\ $n_2=\lfloor\frac{2^t+N_i(r)}{11}\rfloor$,\\$1\leq i\leq 3$, $5\leq t\leq r$}& $ [[ 11n_2,11n_2-40,5]]$ &$ [[2^t+N_i(r),2^t+N_i(r)-38-t,5]]$ \\
\hline
12& $[11,6,5]_4$& \tabincell{c}{$[n_2,n_2-4,5]_{4^5}$,\\$n_2=\lceil\frac{N_i(r)+N_j(r)}{11}\rceil$, or\\$n_2=\lfloor\frac{N_i(r)+N_j(r)}{11}\rfloor$, \\$1\leq i,j\leq 3$}& $ [[ 11n_2,11n_2-40,5]]$ &$ [[N_i(r)+N_j(r),N_i(r)+N_j(r)-50,5]]$ \\
\hline
12& $[11,6,5]_4$& \tabincell{c}{$[n_2,n_2-4,5]_{4^5}$,\\$n_2=\lceil\frac{24+N_i(r)}{11}\rceil$, or\\$n_2=\lfloor\frac{24+N_i(r)}{11}\rfloor$, \\$1\leq i\leq 3$}& $ [[ 11n_2,11n_2-40,5]]$ &$ [[24+N_i(r),24+N_i(r)-43,5]]$ \\
\hline
14& $[11,6,5]_4$&  $[993,989,5]_{4^5}$ & $ [[N_1(r)+1,N_1(r)-39,5]]$ &$ [[N_1(r),N_1(r)-42,5]]$ \\
\hline
14& $[11,6,5]_4$& \tabincell{c}{$[n_2,n_2-4,5]_{4^5}$,\\$n_2=\lceil\frac{2^t+N_1(r)}{11}\rceil$, or\\$n_2=\lfloor\frac{2^t+N_1(r)}{11}\rfloor$, \\ $5\leq t\leq 8$}& $ [[ 11n_2,11n_2-40,5]]$ &$ [[2^t+N_1(r),2^t+N_1(r)-44-t,5]]$ \\
\hline
14& $[11,6,5]_4$& $[995,991,5]_{4^5}$ & $ [[23+N_1(r),N_1(r)-17,5]]$ &$ [[24+N_1(r),N_1(r)-25,5]]$ \\
\hline
15& $[14,8,5]_4$&  $[2008,2004,5]_{4^6}$ & $ [[26+N_2(r),N_2(r)-22,5]]$ &$ [[24+N_2(r),N_2(r)-28,5]]$ \\
\hline
15& $[14,8,5]_4$& \tabincell{c}{$[n_2,n_2-4,5]_{4^5}$,\\$n_2=\lceil\frac{2^t+N_2(r)}{14}\rceil$, or\\$n_2=\lfloor\frac{2^t+N_2(r)}{14}\rfloor$, \\ $5\leq t\leq r-1$}& $ [[ 14n_2,14n_2-48,5]]$ &$ [[2^t+N_2(r),2^t+N_2(r)-47-t,5]]$ \\
\hline
15& $[14,8,5]_4$&$[4012,4008,5]_{4^5}$& $ [[2N_2(r)-4,2N_2(r)-52,5]]$ &$ [[2N_2(r),2N_2(r)-62,5]]$ \\
\hline
15& $[12,6,6]_4$& \tabincell{c}{$[n_2,n_2-5,6]_{4^6}$,\\$n_2=\lceil\frac{2^t+N_2(r)}{12}\rceil$, or\\$n_2=\lfloor\frac{2^t+N_2(r)}{12}\rfloor$,\\ $12\leq t\leq r$}& $ [[ 12n_2,12n_2-60,6]]$ &$ [[2^t+N_2(r),2^t+N_2(r)-49-t,6]]$ \\
\hline
16& $[12,6,6]_4$& \tabincell{c}{$[n_2,n_2-5,6]_{4^6}$,\\$n_2=\lceil\frac{2^t+N_1(r)}{12}\rceil$, or\\$n_2=\lfloor\frac{2^t+N_1(r)}{12}\rfloor$,\\ $9\leq t\leq 12$}& $ [[ 12n_2,12n_2-60,6]]$ &$ [[2^t+N_1(r),2^t+N_1(r)-52-t,6]]$ \\
\hline
\end{tabular}
\end{table*}

\subsection{\textbf{QTPCs with self-dual component codes}}
\noindent
From Lemma \ref{TPC_Same_Fields}, it is easy to see that if there is a linear code satisfying $C_1^{\bot} \subseteq C_1$ (or $C_1^{\bot_h} \subseteq C_1$), we can concatenate it with an arbitrary linear code $C_2$ over the same field to get a dual-containing TPC.  As a special case, if we  use self-dual codes to construct TPCs,  the following QTPCs can be constructed.

\begin{corollary}
\label{sel_dual_same_field}
Let $C$ be an arbitrary self-dual code over $GF(q)$$(q=2$ or $4)$   with parameters $[n,n/2 ,d]$. Then there exists a binary QTPC with parameters $[[n^2,n^2/2 ,d]]$, and  $d\geq H_q^{-1}(1/2)n$    when $n\rightarrow \infty$.
\end{corollary}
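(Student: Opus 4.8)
The plan is to apply Lemma~\ref{TPC_Same_Fields} together with Theorem~\ref{CSS_TPC1_theorem} (or rather its same-field analogue via Lemma~\ref{QECCs Constructions}) to the component pair $C_1=C_2=C$, and then to derive the stated parameters and the asymptotic bound on $d$ from the Gilbert--Varshamov-type behaviour of self-dual codes. First I would set $C_1=C_2=C$ in Lemma~\ref{TPC_Same_Fields}: since $C$ is self-dual we have $C^{\bot}\subseteq C$ (and $C^{\bot_h}\subseteq C$ in the $q=4$ case, because a self-dual code over $GF(4)$ is in particular Hermitian self-dual under the convention used here, or one invokes the Euclidean branch directly), so the tensor product code $\mathcal{C}=C\otimes_H C$ satisfies $\mathcal{C}^{\bot}\subseteq\mathcal{C}$ (resp.\ $\mathcal{C}^{\bot_h}\subseteq\mathcal{C}$). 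By Lemma~\ref{MDOfTPC} the code $\mathcal{C}$ has parameters $[n^2,\,n^2-(n/2)(n/2)\cdot? ,\,d]$ --- here one must be slightly careful: with $\rho_1=\rho_2=n-k=n/2$, Lemma~\ref{MDOfTPC} gives length $n^2$, redundancy $\rho_1\rho_2=n^2/4$, hence dimension $n^2-n^2/4=3n^2/4$, and minimum distance $\min\{d,d\}=d$.

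Next I would feed $\mathcal{C}$ into the CSS construction (Lemma~\ref{QECCs Constructions}, part 1, with $C_1=C_2=\mathcal{C}$, which is legitimate because $\mathcal{C}^{\bot}\subseteq\mathcal{C}$) in the $q=2$ case, or the Hermitian construction (part 2) in the $q=4$ case. Either way one obtains a binary quantum code of length $n^2$ and dimension $2\dim\mathcal{C}-n^2 = 2\cdot(3n^2/4)-n^2 = n^2/2$, with minimum distance at least $d$. This yields the claimed $[[n^2,\,n^2/2,\,d]]$. (If the intended reading of Lemma~\ref{MDOfTPC} in this symmetric case is instead that the redundancy doubles to $n^2/2$ so that $\dim\mathcal{C}=n^2/2$ and the quantum dimension is $0$, then one should instead quote Theorem~\ref{CSS_TPC1} / the construction in \cite{grassl2005quantum} directly for the same-field case; I would state explicitly which convention makes the arithmetic $2\dim\mathcal{C}-n^2=n^2/2$ come out, and this bookkeeping is the first place to be careful.)

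For the asymptotic statement, I would invoke the classical fact that for each $q\in\{2,4\}$ there exist families of self-dual codes of length $n\to\infty$ whose relative minimum distance meets the Gilbert--Varshamov bound, i.e.\ $d/n \to H_q^{-1}(1/2)$, where $H_q$ is the $q$-ary entropy function and $H_q^{-1}$ its inverse on $[0,1-1/q]$ --- this is a standard result (random self-dual codes attain the GV bound, cf.\ the Macwilliams--Sloane--Thompson--Ward theorem and subsequent refinements). Choosing $C$ from such a family and plugging into the construction above gives quantum codes $[[n^2,\,n^2/2,\,d]]$ with $d \geq H_q^{-1}(1/2)\,n$ as $n\to\infty$.

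The main obstacle I anticipate is not conceptual but arithmetic and definitional: pinning down exactly how Lemma~\ref{MDOfTPC} specializes when $n_1=n_2=n$ and $\rho_1=\rho_2=n/2$ so that the dimension count lands on $n^2/2$ rather than $3n^2/4$ or $0$, and making sure the self-dual hypothesis over $GF(4)$ is being used consistently with the (Euclidean vs.\ Hermitian) inner product invoked in Lemma~\ref{TPC_Same_Fields} and Lemma~\ref{QECCs Constructions}. A secondary point requiring a citation rather than an argument is the existence of asymptotically good self-dual code families meeting the GV bound; I would cite this and not reprove it.
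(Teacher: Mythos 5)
Your proposal is correct and follows essentially the same route as the paper, whose own proof is even terser: it simply cites Lemma~\ref{TPC_Same_Fields} together with the same-field construction of \cite{grassl2005quantum} for the existence of the $[[n^2,n^2/2,d]]$ code, and \cite[Ch.~19]{macwilliams1981theory} for the Gilbert--Varshamov bound on self-dual codes. Your bookkeeping ($\dim\mathcal{C}=n^2-\rho_1\rho_2=3n^2/4$, hence quantum dimension $2\cdot 3n^2/4-n^2=n^2/2$) is the correct resolution of the ambiguity you flag, so the hedged alternative reading can be discarded.
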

\begin{proof}
 From   Lemma \ref{TPC_Same_Fields} and Ref. \cite{grassl2005quantum}, we know that there exists a binary QTPC with parameters $[[n^2,n^2/2 ,d]]$.
According to Ref. \cite[Ch. 19]{macwilliams1981theory},  there exist long $q$-ary self-dual codes which achieve  the Gilbert-Varshamov bound, i.e.,  $d\geq H_q^{-1}(1/2)n$    when $n\rightarrow \infty$, where $H_q^{-1}(x)$ is the inverse of the entropy function. \qed
\end{proof}

If we combine binary or quaternary self-dual codes with an arbitrary MDS code over the extension field to construct TPCs, we can obtain QTPCs with   better parameters than those in Corollary \ref{sel_dual_same_field}.  Let $C_1$ be an arbitrary self-dual code over $GF(q)$    with parameters $[n,n/2 ,d]$. Let $C_2$ be an MDS code over the extension field with parameters $[n,n-d+1,d]$.   Then we have the following result.

\begin{corollary}
\label{self_dual_QTPCs2}
There exists a  QTPC with parameters $[[n^2,n^2-nd+n,d]]$, and  $d\geq H_q^{-1}(1/2)n$    when $n\rightarrow \infty$.
\end{corollary}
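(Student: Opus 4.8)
The plan is to apply Theorem~\ref{CSS_TPC1_theorem} with the two component codes as specified, and then to invoke the asymptotic existence of good self-dual codes exactly as in Corollary~\ref{sel_dual_same_field}. Concretely, take $C_1$ to be a self-dual code over $GF(q)$ with parameters $[n,n/2,d]$, so in particular $C_1^\perp = C_1$ (hence $C_1^\perp \subseteq C_1$ when $q=2$, and $C_1^{\perp_h}\subseteq C_1$ when $q=4$, since self-duality implies Hermitian self-duality is available for a suitable choice of $C_1$ over $GF(4)$). Take $C_2$ to be an MDS code over the extension field $GF(q^{\rho_1})$, where $\rho_1 = n - k_1 = n - n/2 = n/2$, with parameters $[n_2,n_2-d+1,d]_{q^{\rho_1}}$; choosing $n_2 = n$ gives $C_2 = [n,n-d+1,d]_{q^{n/2}}$, which exists as an MDS code (e.g. a Reed--Solomon or extended Reed--Solomon code) provided $n \le q^{n/2}+1$, which holds for all $n\ge 2$.

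Next I would read off the parameters from Theorem~\ref{CSS_TPC1_theorem}. With $n_1 = n$, $n_2 = n$, $\rho_1 = n/2$, $\rho_2 = n_2 - k_2 = n - (n-d+1) = d-1$, and $d_1 = d_2 = d$, the theorem yields a pure QTPC with parameters
\begin{equation}
\mathcal{Q} = [[\, n_1 n_2,\; n_1 n_2 - 2\rho_1\rho_2,\; \min\{d_1,d_2\} \,]] = [[\, n^2,\; n^2 - 2\cdot\tfrac{n}{2}\cdot(d-1),\; d \,]] = [[\, n^2,\; n^2 - nd + n,\; d \,]],
\end{equation}
which is exactly the claimed parameter set. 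The dual-containing hypothesis needed by the theorem is supplied entirely by $C_1$ (self-duality), so no condition is imposed on $C_2$ beyond being MDS over the extension field, which is why we are free to use the distance-optimal MDS choice here rather than the weaker bound forced in Corollary~\ref{sel_dual_same_field}.

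For the asymptotic distance claim, I would cite the same fact used in Corollary~\ref{sel_dual_same_field}: by \cite[Ch.~19]{macwilliams1981theory} there exist families of $q$-ary self-dual codes of length $n\to\infty$ whose relative minimum distance meets the Gilbert--Varshamov bound, i.e. $d \ge H_q^{-1}(1/2)\,n$ asymptotically, where $H_q^{-1}$ is the inverse $q$-ary entropy function. Feeding such a $C_1$ into the construction above gives QTPCs with $d \ge H_q^{-1}(1/2)\,n$, completing the proof. The only genuinely delicate point is bookkeeping: one must verify $\rho_1 = n/2$ so that the MDS code lives over the right extension field and has the right number of check symbols $\rho_2 = d-1$, and then check that $2\rho_1\rho_2 = n(d-1)$ so the dimension comes out as $n^2-nd+n$ rather than, say, $n^2-nd$; everything else is an immediate application of results already in the excerpt. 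I do not expect any real obstacle — the argument is a direct specialization of Theorem~\ref{CSS_TPC1_theorem} combined with a standard existence result.
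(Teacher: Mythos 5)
Your proposal is correct and follows essentially the same route the paper intends: specialize Theorem~\ref{CSS_TPC1_theorem} to a self-dual $C_1=[n,n/2,d]$ (so $\rho_1=n/2$ supplies the dual-containing hypothesis) and an MDS $C_2=[n,n-d+1,d]$ over $GF(q^{n/2})$ (so $\rho_2=d-1$), giving $[[n^2,\,n^2-n(d-1),\,d]]$, with the asymptotic distance supplied by the Gilbert--Varshamov bound for self-dual codes as in Corollary~\ref{sel_dual_same_field}. The paper states only that the proof is ``similar to that of Corollary~\ref{sel_dual_same_field}''; your write-up fills in exactly the bookkeeping that the paper leaves implicit.
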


The proof of Corollary \ref{self_dual_QTPCs2} is similar to that of Corollary \ref{sel_dual_same_field}. The rate of the QTPC  in Corollary \ref{self_dual_QTPCs2} is $r=1-\frac{d-1}{n}$ which is   higher than $1/2$ in Corollary \ref{sel_dual_same_field}.

\section{QTPCs With Burst-Error-Correction Abilities}
\label{burst QTPCs}
\noindent
In this section we use cyclic component codes and classical MDS  codes to construct QTPCs with multiple-burst-error-correction abilities based on Theorem \ref{CSS_TPC2}. The restriction on $C_2$ in Theorem \ref{CSS_TPC2} is much easier to be satisfied than that in Theorem \ref{CSS_TPC1_theorem}, but one condition  is that the product matrix $H_{c_1}H_{c_1}^T$ needs to to be a full rank.  If we let $C_1$ be a binary cyclic code with defining set $Z_1$ and $\gcd(n_1,2)=1$, it is easy to verify that this condition on $C_1$ is equivalent to $Z_1=Z_1^{-1}$, where $Z_1^{-1}=\{-z\pmod{n_1}|z\in Z_1\}$. Such cyclic codes are called \emph{reversible} codes in \cite{macwilliams1981theory,massey1964reversible}. Denote by
$f_r(x)\equiv x^{\deg f(x)}f(x^{-1})$
the \emph{reciprocal} polynomial of $f(x)$. A monic polynomial $f(x)$ will be called \emph{self-reciprocal} if and only if $f(x)=f_r(x)$.

\begin{lemma}[{\cite[Theorem 1]{massey1964reversible}}]
The cyclic code generated by the monic polynomial $g(x)$ is
reversible if and only if  $g(x)$ is self-reciprocal.
\end{lemma}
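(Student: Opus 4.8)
The plan is to prove that a cyclic code with generator polynomial $g(x)$ is reversible precisely when $g(x)$ is self-reciprocal, by relating the codewords of the cyclic code to multiples of $g(x)$ and tracking what the reversal operation does to them.

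First I would set up the correspondence between codewords and polynomials: a cyclic code $C$ of length $n$ generated by the monic divisor $g(x)$ of $x^n-1$ consists exactly of the vectors $(c_0,c_1,\ldots,c_{n-1})$ whose associated polynomial $c(x)=c_0+c_1x+\cdots+c_{n-1}x^{n-1}$ is a multiple of $g(x)$ in $GF(q)[x]/(x^n-1)$, with $\deg c(x)<n$. Next I would record the key observation that reversing a codeword $(c_0,\ldots,c_{n-1})\mapsto(c_{n-1},\ldots,c_0)$ corresponds at the polynomial level to $c(x)\mapsto x^{n-1}c(x^{-1})$ (this is the reciprocal-type operation, possibly padded to degree $n-1$). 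So $C$ is reversible, meaning it is closed under this reversal, if and only if for every multiple $c(x)$ of $g(x)$ with $\deg c(x)<n$, the polynomial $x^{n-1}c(x^{-1})$ is again a multiple of $g(x)$.

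Then I would argue both directions. For the ``if'' direction, suppose $g(x)=g_r(x)$. Any codeword polynomial can be written $c(x)=a(x)g(x)$ with $\deg a(x)<n-\deg g(x)=k$. Applying the reversal, $x^{n-1}c(x^{-1}) = x^{n-1}a(x^{-1})g(x^{-1})$, and I would split the factor $x^{n-1}$ as $x^{k-1}\cdot x^{\deg g}$ (using $n-1=(k-1)+\deg g$) to get $\bigl(x^{k-1}a(x^{-1})\bigr)\bigl(x^{\deg g}g(x^{-1})\bigr) = a_r(x)\,g_r(x) = a_r(x)g(x)$, which is a multiple of $g(x)$; one checks degrees stay below $n$. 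Hence the reversed codeword lies in $C$, so $C$ is reversible. For the ``only if'' direction, assume $C$ is reversible. The generator polynomial $g(x)$ itself is a codeword (take $a(x)=1$), so its reversal $x^{n-1}g(x^{-1}) = x^{k-1}g_r(x)$ must be a codeword, hence divisible by $g(x)$; since $\deg g_r(x)=\deg g(x)$ and $g_r(x)$ is monic up to a unit (more carefully, $g_r$ has the same degree as $g$ and nonzero leading/constant coefficients because $g(0)\neq 0$ as $g\mid x^n-1$), divisibility of $x^{k-1}g_r(x)$ by $g(x)$ together with $\gcd(g(x),x)=1$ forces $g(x)\mid g_r(x)$, and equal degrees plus matching normalization give $g(x)=g_r(x)$.

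The main obstacle I anticipate is bookkeeping around the padding and normalization: the ``reciprocal polynomial'' $f_r(x)=x^{\deg f}f(x^{-1})$ depends on the actual degree, whereas the codeword reversal uses the fixed length $n$, so I must carefully insert the extra factors of $x$ (the difference between $x^{n-1}$ and $x^{\deg c}$) and confirm they are absorbed cleanly using $g(0)\neq 0$, which holds because $g(x)$ divides $x^n-1$. A secondary point worth stating explicitly is that $g_r(x)$ is again monic after dividing by its (nonzero) leading coefficient, so that the equality $g=g_r$ for the monic representatives is the right formulation; I would remark that the constant term of $g$ being a unit is exactly what makes the reciprocal operation behave well. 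Once these normalization issues are handled, the argument is a short direct computation in $GF(q)[x]/(x^n-1)$.
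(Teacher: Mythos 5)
Your proof is correct, and in fact the paper offers no proof of its own here --- it simply cites Theorem~1 of Massey's 1964 paper, whose argument is essentially the one you give: identify codeword reversal with $c(x)\mapsto x^{n-1}c(x^{-1})$, factor $x^{n-1}=x^{k-1}\cdot x^{\deg g}$ to show closure under reversal when $g=g_r$, and apply reversal to $g(x)$ itself together with $g(0)\neq 0$ for the converse. The only point worth tightening is the final normalization step in the ``only if'' direction: from $g(x)\mid g_r(x)$ and equal degrees you get $g_r=c\,g$ for a constant $c$, and comparing leading and constant coefficients gives $c=g(0)$ and $c\,g(0)=1$, i.e.\ $g(0)^2=1$; over a field of characteristic~$2$ (the only case the paper uses, since the lemma is applied to binary reversible BCH and Fire codes) this forces $g(0)=1$ and hence $g=g_r$ exactly, whereas over odd characteristic one would either normalize the reciprocal by $g(0)^{-1}$ or allow $g_r=-g$. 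With that caveat made explicit, your argument is complete.
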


We denote by $\mathcal{BCH}(n,b;\delta)$ a binary  BCH code with design distance $\delta$ and   defining
 set $Z=\{\mathfrak{C}_b,\mathfrak{C}_{b+1},\ldots,\mathfrak{C}_{b+\delta-2}\}$, where $\mathfrak{C}_b=\{b2^s\pmod{n}| s\in \mathbb{Z}, s\geq 0\}$ denotes
 the binary cyclotomic
coset of $b$ mod $n$, $0\leq b\leq n-1$.  For every design distance $2\leq \delta\leq n$, there always exists at least one $  b\in\{0,1,\ldots, n-1\}$
such that $Z=Z^{-1}$ from \cite{macwilliams1981theory}. Therefore, we can always find out a reversible BCH code with design distance $2\leq \delta\leq n$
and use it as the binary component code of the TPC.

\begin{theorem}
\label{Reversible_BCH}
Let $n_1$ be an odd integer. Let $C_1=[n_1,1,n_1]$ be a repetition code. Let $C_2=[n_2,n_2-n_1+1,n_1]_{2^{\rho_1}}$ be an MDS code over $GF(2^{\rho_1})$, where $n_2\leq2^{\rho_1}$, $\rho_1=n_1-1$. If $n_1\leq\lfloor n_2/2\rfloor+1$, then there exists a QTPC with parameters $[[n_1n_2,n_1n_2-2(n_1-1)^2, n_1]]$. This code has a quantum analog of multiple $(\lceil\frac{n_1}{2}\rceil-1)$-burst-error-correction abilities, provided these bursts fall in distinct subblocks.
\end{theorem}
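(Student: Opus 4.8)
The plan is to assemble the claimed QTPC from Theorem \ref{CSS_TPC2} and then verify its burst-error-correction capability via Lemma \ref{wolftheorem}. First I would check the three hypotheses of Theorem \ref{CSS_TPC2} for the chosen component codes. The inner code $C_1=[n_1,1,n_1]$ is the binary repetition code, whose parity check matrix $H_{c_1}$ can be taken as the $(n_1-1)\times n_1$ matrix $[\,I_{n_1-1}\mid \mathbf{1}\,]$; a direct computation gives $H_{c_1}H_{c_1}^T = I_{n_1-1}+J_{n_1-1}$ over $GF(2)$, and since $n_1$ is odd this matrix is invertible (its determinant is $n_1 \bmod 2 = 1$), so $H_{c_1}H_{c_1}^T$ has full rank $\rho_1=n_1-1$. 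The outer code $C_2=[n_2,n_2-n_1+1,n_1]_{2^{\rho_1}}$ is an MDS code over $GF(2^{\rho_1})$; here I would invoke the standard fact (as used elsewhere in the paper for RS/MDS codes) that for $n_2\le 2^{\rho_1}$ one can choose the MDS code so that $C_2^\bot\subseteq C_2$ — this holds because $\dim C_2 = n_2-\rho_1 \ge \dim C_2^\bot = \rho_1$ whenever $n_1-1=\rho_1\le n_2/2$, i.e. $n_1\le \lfloor n_2/2\rfloor+1$, which is exactly the stated hypothesis, and MDS codes of complementary dimensions over a fixed field can be nested. With all hypotheses met, Theorem \ref{CSS_TPC2} together with Lemma \ref{MDOfTPC} yields a pure QTPC $\mathcal{Q}=[[n_1n_2,\,n_1n_2-2\rho_1\rho_2,\,\min\{d_1,d_2\}]]$, and substituting $\rho_1=n_1-1$, $\rho_2=n_2-k_2=n_1-1$, $d_1=d_2=n_1$ gives precisely $[[n_1n_2,\,n_1n_2-2(n_1-1)^2,\,n_1]]$.

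Next I would establish the burst-error-correction claim. The idea is to transfer the classical burst-correction property of the tensor product code $\mathcal{C}=C_2\otimes_H C_1$ to the quantum setting via the CSS-type correspondence (Lemma \ref{vatan_Constructions}): a set $E$ of quantum errors is correctable by the QTPC as long as the associated bit-error class $\mathbf{e}_X$ and phase-error class $\mathbf{e}_Z$ are each correctable by $\mathcal{C}$ (and its relevant dual), since the QTPC here is built from $\mathcal{C}$ and $\mathcal{C}_L$ playing the roles of $\mathcal{C}_1,\mathcal{C}_2$. So it suffices to show $\mathcal{C}$ classically corrects the stated burst patterns. By Lemma \ref{wolftheorem}, $\mathcal{C}$ corrects all error patterns in which the erroneous subblocks form a pattern correctable by $C_2$ and the errors within each erroneous subblock form a pattern correctable by $C_1$. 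Since $C_2$ is an $[n_2,n_2-n_1+1,n_1]$ MDS code, it corrects any $t$ symbol errors with $2t\le n_1-1$, i.e. up to $\lfloor (n_1-1)/2\rfloor = \lceil n_1/2\rceil - 1$ erroneous subblocks (using $n_1$ odd). Within a single subblock, the inner code is $C_1=[n_1,1,n_1]$, which corrects any error confined to at most $n_1-1$ coordinates — in particular any single burst of length $\le n_1-1$, hence certainly any burst of length $\le \lceil n_1/2\rceil -1$. Combining, $\mathcal{C}$ corrects up to $\lceil n_1/2\rceil - 1$ bursts each of length at most $\lceil n_1/2\rceil - 1$, provided they lie in distinct subblocks. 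Applying this to both $\mathbf{e}_X$ and $\mathbf{e}_Z$ gives the quantum multiple-burst-correction statement.

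The main obstacle I anticipate is making the nesting $C_2^\bot\subseteq C_2$ precise and self-contained: one must argue that among MDS codes over $GF(2^{\rho_1})$ of length $n_2$ there is a self-orthogonal (dual-containing) one of the required dimension, and the clean argument (e.g. via generalized Reed-Solomon codes with a suitable evaluation-point/scalar choice, or via the known classification of MDS self-orthogonal codes) needs the dimension inequality $\rho_1\le n_2-\rho_1$, which is where the hypothesis $n_1\le\lfloor n_2/2\rfloor+1$ is genuinely used; this same inequality simultaneously guarantees the burst count $\lceil n_1/2\rceil-1$ does not exceed the subblock count. A secondary technical point is confirming that the $\mathbf{e}_X/\mathbf{e}_Z$ reduction in Lemma \ref{vatan_Constructions} applies verbatim when $\mathcal{C}_1=\mathcal{C}_L$ and $\mathcal{C}_2=\mathcal{C}$ differ only by the invertible left-multiplier $L$ on the inner parity check matrix — but since $\mathcal{C}_L$ and $\mathcal{C}$ have identical parameters and identical (burst-)error-correction abilities, as already noted in the proof of Theorem \ref{CSS_TPC2}, this causes no difficulty. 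Everything else reduces to the routine substitutions of $\rho_1,\rho_2,d_1,d_2$ recorded above.
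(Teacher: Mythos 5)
Your proposal follows the paper's proof essentially verbatim: both verify the full-rank condition on $H_{c_1}H_{c_1}^T$ (the paper via reversibility of the odd-length repetition code, you via the direct computation $\det(I_{n_1-1}+J_{n_1-1})=n_1\bmod 2=1$), both obtain the dual-containing MDS outer code from the dimension inequality supplied by $n_1\le\lfloor n_2/2\rfloor+1$, and both combine Theorem \ref{CSS_TPC2} with Lemmas \ref{wolftheorem} and \ref{vatan_Constructions} to get the parameters and the multiple-burst claim. The one slip is your assertion that the repetition code ``corrects any error confined to at most $n_1-1$ coordinates'': it does not, since an all-ones burst on $n_1-1$ positions and a single error on the remaining position differ by the all-ones codeword and are therefore confusable, and the Reiger bound caps the burst-correction radius of an $[n_1,1]$ code at $\lfloor(n_1-1)/2\rfloor$. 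This does not damage your argument, because the bursts you actually need $C_1$ to handle have length at most $\lceil n_1/2\rceil-1=\lfloor(n_1-1)/2\rfloor$ and hence weight within the ordinary error-correction radius of the distance-$n_1$ code, which is the correct (and sufficient) justification.
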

\begin{proof}
If $n_1$ is odd, it is easy to see that $C_1=[n_1,1,n_1]$ is a reversible   code.  If the design distance $n_1\leq\lfloor n_2/2\rfloor+1$ and $n_2\leq2^{\rho_1}$, then  there exists a dual-containing MDS
code $C_2=[n_2,n_2-n_1+1,n_1]_{2^{\rho_1}}$ over $GF(2^{\rho_1})$ (see \cite{li2008quantum}).
Therefore, there exists a QTPC with parameters $[[n_1n_2,n_1n_2-2(n_1-1)^2, n_1]]$ by Theorem \ref{CSS_TPC2}. Combining Lemma \ref{vatan_Constructions} and Lemma \ref{wolftheorem}, we know that this code has a quantum analog of
multiple $(\lceil\frac{n_1}{2}\rceil-1)$-burst-error-correction abilities, provided these bursts fall in distinct subblocks.  \qed
\end{proof}

Fire codes  are a class of cyclic codes  used for correcting burst errors \cite{lin2004error}.  The definition of Fire codes is given as follows. Let $b(x)$ be an irreducible polynomial of degree $w$ over $GF(2)$. Let $\rho$ be the period of $b(x)$ where $\rho$ is the smallest integer such that $b(x)$ divides $x^\rho+1$. Let $l$
 be a positive integer such that $l\leq w$, and $2l-1$ is not divisible by $\rho$. Then an $l$-burst-error-correction Fire code $F$  is defined by the   generator polynomial  \[g(x)=(x^{2l-1}+1)b(x).\]
The length $n$ of this Fire code is the least common multiple (LCM) of $2l-1$ and the period of $\rho$ of $b(x)$, i.e.,
\[n=\text{LCM}(2l-1,\rho).\]
The number of   check symbols of this  code is $w+2l-1$. Note that the two factors $x^{2l-1}+1$ and $b(x)$ are relatively prime.

It is easy to see that the $(x^{2l-1}+1)$-factor in $g(x)$ is a self-reciprocal polynomial. If we choose $b(x)$ as a self-reciprocal irreducible polynomial over $GF(2)$, then $g(x)$ is   a self-reciprocal  polynomial over $GF(2)$. In \cite{yucas2004self},  the number of self-reciprocal irreducible polynomials of degree $w=2t$ over $GF(2)$ is given by
\begin{equation}
N_2(w)=\frac{1}{2t}\sum_{d|t,\ d\text{ odd}}\mu(d)2^{t/d}.
\end{equation}
It is easy to verify that $N_2(w)>0$. Therefore, we can always choose a self-reciprocal polynomial with an even degree to construct a reversible Fire code. Then we can construct QTPCs based on reversible Fire codes with
multiple-burst-error-correction abilities.

\begin{theorem}
 \label{Fire_RS_QTPC}
 Let  $C_1=[n_1,k_1]$ be a reversible $l$-burst-error-correction  Fire code, and let $C_2=[n_2,k_2,n_2-k_2+1]_{2^{\rho_1}}$ be  a RS code  over the extension field $GF(2^{\rho_1})$, and the numbers of check symbols are  $\rho_1=n_1-k_1$ and  $\rho_2=n_2-k_2$, respectively. If
 $k_2 \geq \lceil\frac{n_2}{2}\rceil$, then there exists a QTPC with parameters $\mathcal{Q}=[[n_1n_2,n_1n_2-2\rho_1\rho_2]]$ which has a quantum analog of  $\lfloor\frac{\rho_2+1}{2}\rfloor$ numbers of $l$-burst-error-correction abilities, provided these bursts fall in distinct subblocks.
 \end{theorem}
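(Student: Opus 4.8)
The plan is to exhibit $\mathcal{Q}$ as an instance of Theorem~\ref{CSS_TPC2}, and then to extract the multiple-burst-correction property from Wolf's decomposition (Lemma~\ref{wolftheorem}) combined with the generalized CSS construction of Lemma~\ref{vatan_Constructions}. First I would verify the two hypotheses of Theorem~\ref{CSS_TPC2}. For $C_2$: the hypothesis $k_2\ge\lceil n_2/2\rceil$ forces $\rho_2=n_2-k_2\le\lfloor n_2/2\rfloor\le k_2$, so among RS codes over $GF(2^{\rho_1})$ of length $n_2$ and dimension $k_2$ one can select a dual-containing one, $C_2^{\bot}\subseteq C_2$ (cf.\ \cite{li2008quantum}, exactly as in the proof of Theorem~\ref{Reversible_BCH}). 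For $C_1$: since $C_1$ is a Fire code its length is $n_1=\mathrm{lcm}(2l-1,\rho)$, where the period $\rho$ of the irreducible factor $b(x)$ divides $2^{\deg b}-1$ and is hence odd, so $n_1$ is odd and $\gcd(n_1,2)=1$; being reversible, the defining set of $C_1$ satisfies $Z_1=Z_1^{-1}$, which, by the equivalence recorded earlier in this section (following \cite{massey1964reversible}), is precisely the condition that $H_{c_1}H_{c_1}^{T}$ have full rank. Theorem~\ref{CSS_TPC2} then yields a QTPC with parameters $\mathcal{Q}=[[n_1n_2,\,n_1n_2-2\rho_1\rho_2]]$.

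For the burst-correction claim I would reuse the internal structure of the proof of Theorem~\ref{CSS_TPC2}. There one builds from $C_1$ an equivalent inner parity check $LH_{c_1}$ and a TPC $\mathcal{C}_L$ with $H_{[\mathcal{C}_L]}H_{[\mathcal{C}]}^{T}=0$; transposing gives $H_{[\mathcal{C}]}H_{[\mathcal{C}_L]}^{T}=0$ as well, so both $\mathcal{C}_L^{\bot}\subseteq\mathcal{C}$ and $\mathcal{C}^{\bot}\subseteq\mathcal{C}_L$ hold, while $\mathcal{C}$ and $\mathcal{C}_L$ have the same parameters and the same classical error-correction behaviour (each being the tensor product $C_2\otimes C_1$ up to an invertible change of the inner parity check, hence each enjoying the Wolf decomposition of Lemma~\ref{wolftheorem}). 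Let $E$ be the set of quantum errors supported on at most $t=\lfloor(\rho_2+1)/2\rfloor$ distinct length-$n_1$ subblocks, with the error inside each such subblock confined to a burst of length at most $l$. Then the bit-error class $\mathbf{e}_X$ and the phase-error class $\mathbf{e}_Z$ of $E$ each lie in the classical error class of Lemma~\ref{wolftheorem}: the set of erroneous subblocks is a pattern of weight at most $t$ among the $n_2$ subblocks, which the MDS code $C_2$ of minimum distance $\rho_2+1$ corrects, and the error inside each erroneous subblock is an $l$-burst, which the $l$-burst Fire code $C_1$ corrects. Hence $\mathcal{C}_L$ has $\mathbf{e}_X$-correction and $\mathcal{C}$ has $\mathbf{e}_Z$-correction. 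Taking $\mathcal{C}_1=\mathcal{C}_L$, $\mathcal{C}_2=\mathcal{C}$ in Lemma~\ref{vatan_Constructions} (where the required inclusion $\mathcal{C}_2^{\bot}\subseteq\mathcal{C}_1$ is $\mathcal{C}^{\bot}\subseteq\mathcal{C}_L$) produces an $[[n_1n_2,\,n_1n_2-2\rho_1\rho_2]]$ quantum code with $E$-correction ability, which is the asserted multiple $l$-burst-correction with the bursts falling in distinct subblocks.

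I expect the genuinely routine points to be the Fire-code length computation and the RS duality. The delicate step is the simultaneous translation of the quantum error set $E$ into classical Wolf form for both $\mathbf{e}_X$ and $\mathbf{e}_Z$ at once: one must check that differences of admissible inner bursts are still detected by $C_1$, so that each erroneous subblock contributes a nonzero outer syndrome symbol, and that the induced outer error pattern has weight at most $t$, which is exactly what ties the count $\lfloor(\rho_2+1)/2\rfloor$ to the decoding radius of the MDS code $C_2$. Verifying that $\mathcal{C}$ and $\mathcal{C}_L$ indeed inherit the Wolf subblock structure (so that Lemma~\ref{wolftheorem} may be invoked for both of them) is the other place where care is needed.
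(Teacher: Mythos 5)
Your proposal is correct and follows essentially the same route as the paper: verify the hypotheses of Theorem~\ref{CSS_TPC2} (dual-containing RS code from $k_2\ge\lceil n_2/2\rceil$, full rank of $H_{c_1}H_{c_1}^{T}$ from reversibility), then combine Lemma~\ref{wolftheorem} with the generalized CSS construction of Lemma~\ref{vatan_Constructions} to obtain the multiple-burst-correction ability. Your write-up is in fact more explicit than the paper's at two points the paper leaves implicit --- the oddness of the Fire-code length and the identification of the pair $(\mathcal{C}_L,\mathcal{C})$ as the two classical codes fed into Lemma~\ref{vatan_Constructions} --- but these are elaborations, not a different argument.
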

 \begin{proof}
Let $H_{c_1}$ be the parity check matrix of $C_1$, then the product matrix $H_{c_1}H_{c_1}^T$ is of full rank if
$C_1$ is a reversible Fire code. If $k_2 \geq \lceil\frac{n_2}{2}\rceil$, then there exists a dual-containing RS code $C_2^\perp\subseteq C_2$ according to \cite{grassl1999quantum}.
 Let $\mathcal{C}=C_2\otimes_H C_1$ be the tensor product code of $C_1$ and $C_2$, then $\mathcal{C}$ is an
$[n_1n_2,n_1n_2-\rho_1\rho_2]$ binary code which corrects $\lfloor\frac{n_2-k_2+1}{2}\rfloor=\lfloor\frac{\rho_2+1}{2}\rfloor$ or fewer bursts
of errors, each burst is less than or equal to $l$, provided these
bursts fall in distinct subblocks.
Combining Lemma \ref{vatan_Constructions},  Lemma \ref{wolftheorem} and Theorem \ref{CSS_TPC2}, we know that there exists a QTPC with parameters $\mathcal{Q}=[[n_1n_2,n_1n_2-2\rho_1\rho_2]]$  which has a quantum analog of  $\lfloor\frac{\rho_2+1}{2}\rfloor$ numbers of $l$-burst-error-correction abilities, provided these bursts fall in distinct subblocks. \qed
 \end{proof}

\begin{examples}
Consider the self-reciprocal polynomial $b(x)=1+x^2+x^3+x^4$ over $GF(2)$. Its period is $\rho=5$. Let $l=4$. A reversible Fire code is defined by the generator polynomial  $g(x)=(x^{7}+1)b(x)$. Its length is $n=\text{LCM}(7,5)=35$. Then a  $[35,24]$ reversible Fire code $F$ with $4$-burst-error-correction abilities  can be obtained. Choose $C=[23,24-t,t]$ as a narrow-sense RS code over $GF(2^{11})$. It is easy to see that if $2\leq t\leq 12$, then $C^\bot \subseteq C$  from $\cite{grassl1999quantum}$. Let $\mathcal{C}=C\otimes_H F$ be the tensor product code of $C$ and $F$, then  $\mathcal{C}$ is an $[805,816-11t]$  binary
code which corrects $\lfloor\frac{t}{2}\rfloor$ or fewer bursts of errors, each
burst is less than or equal to $4$, provided these bursts
fall in distinct subblocks. Then there exists a QTPC  $\mathcal{Q}=[[805,827-22t]]$ with a quantum analog of $\lfloor\frac{t}{2}\rfloor$  numbers of $4$-burst-error-correction abilities, provided these bursts fall in distinct subblocks.
\end{examples}

\section{Decoding Of QTPCs}
\label{QTPCs Decoding}
\noindent
The decoding  procedure for the classical TPCs can be done by performing  an outer decoding firstly, followed by an inner decoding    \cite{wolf1965codes,alhussien2010iteratively}.
The decoding of QTPCs can be done similarly. 
\begin{itemize}
\item[i)] Outer Decoding:
Through performing  outer measurement on the ancilla qubits, the syndrome
is calculated as a $q$-ary vector  and is mapped
to a vector  with subblocks over $GF(q^{\rho_1})$. Then the outer recovery and decoding are performed.
\item[ii)] Inner Decoding: If the outer decoding is successfully accomplished,  the erroneous
subblocks could be determined. Then measurement is performed  on the ancilla qubits correlated with the erroneous subblocks. Finally, the inner decoding
is performed  only on the erroneous subblocks.
\end{itemize}

\section{Conclusion and Discussion}
\label{Conclusions}
\noindent
A general construction    of QTPCs was proposed in this paper. Since the parity check matrix of a classical TPC has a tensor product structure, the
construction of the corresponding QTPC  is less constrained compared to   other classes of QECCs, leading to
many choices of selecting component  codes in designing various QTPCs.
Compared with  CQCs, the component code selections of QTPCs are much
more flexible than those of CQCs. Several families of QTPCs have been constructed with parameters better than other classes of QECCs.
  It is worth noting that all QTPCs constructed are pure and have exact parameters. In particular,
QTPCs have  quantum multiple-burst-error-correction abilities as
their classical counterparts,  provided these bursts fall in distinct
subblocks.
Finally, whether QTPCs could be used in   quantum storage systems just like what happened of classical TPCs is worthy of further investigation in future.

%
\section*{Acknowledgments}
\noindent
The authors are grateful to the Editor and the anonymous referees for the constructive
comments and valuable suggestions that help to improve
the manuscript. The work of Jihao Fan was supported by the China Scholarship Council (Grant No. 201406090079), the National Natural Science
Foundation of China (Grant No. 61403188), and the Natural Science Foundation
of Jiangsu Province (Grant No. BK20140823). The work of Yonghui Li was supported by an ARC under Grant DP150104019.  MH was supported by an ARC Future Fellowship under Grant FT140100574.

\newpage
\nonumsection{References}
\noindent

\bibliographystyle{IEEEtranS}
\bibliography{IEEEabrv,QIC160802}

\begin{thebibliography}{10}
\providecommand{\url}[1]{#1}
\csname url@samestyle\endcsname
\providecommand{\newblock}{\relax}
\providecommand{\bibinfo}[2]{#2}
\providecommand{\BIBentrySTDinterwordspacing}{\spaceskip=0pt\relax}
\providecommand{\BIBentryALTinterwordstretchfactor}{4}
\providecommand{\BIBentryALTinterwordspacing}{\spaceskip=\fontdimen2\font plus
\BIBentryALTinterwordstretchfactor\fontdimen3\font minus
  \fontdimen4\font\relax}
\providecommand{\BIBforeignlanguage}[2]{{%
\expandafter\ifx\csname l@#1\endcsname\relax
\typeout{** WARNING: IEEEtranS.bst: No hyphenation pattern has been}%
\typeout{** loaded for the language `#1'. Using the pattern for}%
\typeout{** the default language instead.}%
\else
\language=\csname l@#1\endcsname
\fi
#2}}
\providecommand{\BIBdecl}{\relax}
\BIBdecl

\bibitem{alhussien2010iteratively}
H.~Alhussien and J.~Moon, ``An iteratively decodable tensor product code with
  application to data storage,'' \emph{IEEE J. Sel. Areas Commun.}, vol.~2,
  no.~28, pp. 228--240, 2010.

\bibitem{aly2007quantum}
S.~Aly, A.~Klappenecker, P.~K. Sarvepalli \emph{et~al.}, ``On quantum and
  classical {BCH} codes,'' \emph{IEEE Trans. Inform. Theory}, vol.~53, no.~3,
  pp. 1183--1188, 2007.

\bibitem{bossert1999some}
M.~Bossert, H.~Grie{\ss}er, J.~Maucher, and V.~V. Zyablov, ``Some results on
  generalized concatenation of block codes,'' in \emph{International Symposium
  on Applied Algebra, Algebraic Algorithms, and Error-Correcting Codes}.\hskip
  1em plus 0.5em minus 0.4em\relax Springer, 1999, pp. 181--190.

\bibitem{brun2006correcting}
T.~Brun, I.~Devetak, and M.-H. Hsieh, ``Correcting quantum errors with
  entanglement,'' \emph{Science}, vol. 314, no. 5798, pp. 436--439, 2006.

\bibitem{calderbank1996good}
\BIBentryALTinterwordspacing
A.~R. Calderbank and P.~W. Shor, ``Good quantum error-correcting codes exist,''
  \emph{Phys. Rev. A}, vol.~54, no.~2, p. 1098, 1996. [Online]. Available:
  \url{http://journals.aps.org/pra/abstract/10.1103/PhysRevA.54.1098}
\BIBentrySTDinterwordspacing

\bibitem{calderbank1998quantum}
A.~Calderbank, E.~Rains, P.~Shor, and N.~Sloane, ``Quantum error correction via
  codes over {GF(4)},'' \emph{IEEE Trans. Inform. Theory}, vol.~44, no.~4, pp.
  1369--1387, 1998.

\bibitem{cannon2008handbook}
\BIBentryALTinterwordspacing
J.~Cannon, W.~Bosma, C.~Fieker, and A.~Steel, \emph{Handbook of Magma
  Functions}, Sydney, 2008. [Online]. Available:
  \url{https://magma.maths.usyd.edu.au/magma/handbook/}
\BIBentrySTDinterwordspacing

\bibitem{caruso2014quantum}
F.~Caruso, V.~Giovannetti, C.~Lupo, and S.~Mancini, ``Quantum channels and
  memory effects,'' \emph{Rev. Mod. Phys.}, vol.~86, no.~4, p. 1203, 2014.

\bibitem{chaichanavong2006tensor1}
P.~Chaichanavong and P.~H. Siegel, ``Tensor-product parity codes: Combination
  with constrained codes and application to perpendicular recording,''
  \emph{IEEE Trans. Magn.}, vol.~42, no.~2, pp. 214--219, 2006.

\bibitem{chaichanavong2006tensor}
------, ``Tensor-product parity code for magnetic recording,'' \emph{IEEE
  Trans. Magn.}, vol.~42, no.~2, pp. 350--352, 2006.

\bibitem{fan2017comments}
J.~Fan and H.~Chen, ``Comments on and corrections to ``{On} the equivalence of
  generalized concatenated codes and generalized error location codes'',''
  \emph{IEEE Trans. Inform. Theory}, vol.~63, no.~8, pp. 5437--5439, 2017.

\bibitem{fan2017construction}
------, ``A construction of quantum error-locating codes,'' \emph{Commun.
  Theor. Phys.}, vol.~67, no.~1, p.~37, 2017.

\bibitem{gabrys2013graded}
R.~Gabrys, E.~Yaakobi, and L.~Dolecek, ``Graded bit-error-correcting codes with
  applications to {Flash} memory,'' \emph{IEEE Trans. Inform. Theory}, vol.~59,
  no.~4, pp. 2315--2327, 2013.

\bibitem{gottesman1997stabilizer}
\BIBentryALTinterwordspacing
D.~Gottesman, ``Stabilizer codes and quantum error correction,'' Ph.D.
  dissertation, California Institute of Technology, 1997. [Online]. Available:
  \url{http://arxiv.org/abs/quant-ph/9705052}
\BIBentrySTDinterwordspacing

\bibitem{Grassl:codetables}
M.~Grassl, ``Bounds on the minimum distance of linear codes and quantum
  codes,'' Online available at \url{http://www.codetables.de}, 2007.

\bibitem{grassl1999quantum}
M.~Grassl, W.~Geiselmann, and T.~Beth, ``Quantum {Reed-Solomon} codes,'' in
  \emph{Applied Algebra, Algebraic Algorithms and Error-correcting
  Codes}.\hskip 1em plus 0.5em minus 0.4em\relax Springer, 1999, pp. 231--244.

\bibitem{grassl2005quantum}
M.~Grassl and M.~R{\"o}tteler, ``Quantum block and convolutional codes from
  self-orthogonal product codes,'' in \emph{Proc. {IEEE} Int. Symp. Inf.
  Theory}, Adelaide, SA, Australia, Sept. 2005, pp. 1018--1022.

\bibitem{grassl2009generalized}
M.~Grassl, P.~Shor, G.~Smith, J.~Smolin, and B.~Zeng, ``Generalized
  concatenated quantum codes,'' \emph{Phys. Rev. A}, vol.~79, no.~5, p. 050306,
  2009.

\bibitem{Helmut2003}
H.~Grie{\ss}er, \emph{On soft concatenated decoding of block codes}.\hskip 1em
  plus 0.5em minus 0.4em\relax VDI Fortschritt-Berichte, Reihe 10, Nr. 731: VDI
  Verlag Duesseldorf, 2003.

\bibitem{hsieh2007general}
M.-H. Hsieh, I.~Devetak, and T.~Brun, ``General entanglement-assisted quantum
  error-correcting codes,'' \emph{Phys. Rev. A}, vol.~76, no.~6, p. 062313,
  2007.

\bibitem{hsieh2011high}
M.-H. Hsieh, W.-T. Yen, and L.-Y. Hsu, ``High performance entanglement-assisted
  quantum {LDPC} codes need little entanglement,'' \emph{IEEE Trans. Inform.
  Theory}, vol.~57, no.~3, pp. 1761--1769, 2011.

\bibitem{huang2015binary}
\BIBentryALTinterwordspacing
P.~Huang, E.~Yaakobi, H.~Uchikawa, and P.~H. Siegel, ``Binary linear locally
  repairable codes,'' \emph{arXiv preprint arXiv:1511.06960}, 2015. [Online].
  Available: \url{http://arxiv.org/abs/1511.06960}
\BIBentrySTDinterwordspacing

\bibitem{huang2015linear}
------, ``Linear locally repairable codes with availability,'' in \emph{{IEEE}
  Int. Symp. Inf. Theory}.\hskip 1em plus 0.5em minus 0.4em\relax Hong Kong:
  IEEE, June 2015, pp. 1871--1875.

\bibitem{imai1981generalized}
H.~Imai and H.~Fujiya, ``Generalized tensor product codes,'' \emph{IEEE Trans.
  Inform. Theory}, vol.~27, no.~2, pp. 181--187, 1981.

\bibitem{kawabata2000quantum}
S.~Kawabata, ``Quantum interleaver: quantum error correction for burst error,''
  \emph{J. Phys. Soc. Jpn.}, vol.~69, no.~11, pp. 3540--3543, 2000.

\bibitem{kaynak2014classification}
M.~N. Kaynak, P.~R. Khayat, and S.~Parthasarathy, ``Classification codes for
  soft information generation from hard {Flash} reads,'' \emph{IEEE J. Sel.
  Areas Commun.}, vol.~32, no.~5, pp. 892--899, 2014.

\bibitem{ketkar2006nonbinary}
A.~Ketkar, A.~Klappenecker, S.~Kumar, and P.~K. Sarvepalli, ``Nonbinary
  stabilizer codes over finite fields,'' \emph{IEEE Trans. Inform. Theory},
  vol.~52, no.~11, pp. 4892--4914, 2006.

\bibitem{knill1996concatenated}
\BIBentryALTinterwordspacing
E.~Knill and R.~Laflamme, ``Concatenated quantum codes,'' \emph{arXiv preprint
  quant-ph/9608012}, 1996. [Online]. Available:
  \url{http://arxiv.org/abs/quant-ph/9608012}
\BIBentrySTDinterwordspacing

\bibitem{la2012asymmetric}
G.~G. La~Guardia, ``Asymmetric quantum product codes,'' \emph{Int. J. Quantum
  Inform.}, vol.~10, no.~01, p. 1250005, 2012.

\bibitem{li2008binary}
R.~Li and X.~Li, ``Binary construction of quantum codes of minimum distances
  five and six,'' \emph{Discrete Math.}, vol. 308, no.~9, pp. 1603--1611, 2008.

\bibitem{li2008quantum}
Z.~Li, L.-J. Xing, and X.-M. Wang, ``Quantum generalized {Reed-Solomon} codes:
  Unified framework for quantum maximum-distance-separable codes,'' \emph{Phys.
  Rev. A}, vol.~77, no.~1, p. 012308, 2008.

\bibitem{lidl1997finite}
R.~Lidl and H.~Niederreiter, \emph{Finite fields}.\hskip 1em plus 0.5em minus
  0.4em\relax Cambridge University Press, 1997, vol.~20.

\bibitem{lin2004error}
S.~Lin and D.~J. Costello, \emph{Error Control Coding: Fundamentals and
  Applications}, 2nd~ed.\hskip 1em plus 0.5em minus 0.4em\relax Upper Saddle
  River, NJ: Prentice-Hall, Inc., 2004.

\bibitem{ling2010generalization}
S.~Ling, J.~Luo, and C.~Xing, ``Generalization of steane's enlargement
  construction of quantum codes and applications,'' \emph{IEEE Trans. Inform.
  Theory}, vol.~56, no.~8, pp. 4080--4084, 2010.

\bibitem{macwilliams1981theory}
F.~J. MacWilliams and N.~J.~A. Sloane, \emph{The Theory of Error-Correcting
  Codes}.\hskip 1em plus 0.5em minus 0.4em\relax Amsterdam: The Netherlands:
  North-Holland, 1981.

\bibitem{massey1964reversible}
J.~L. Massey, ``Reversible codes,'' \emph{Information and Control}, vol.~7,
  no.~3, pp. 369--380, 1964.

\bibitem{maucher2000equivalence}
J.~Maucher, V.~V. Zyablov, and M.~Bossert, ``On the equivalence of generalized
  concatenated codes and generalized error location codes,'' \emph{IEEE Trans.
  Inform. Theory}, vol.~46, no.~2, pp. 642--649, 2000.

\bibitem{nebe2006self}
G.~Nebe, E.~M. Rains, and N.~J.~A. Sloane, \emph{Self-Dual Codes and Invariant
  Theory}.\hskip 1em plus 0.5em minus 0.4em\relax Springer, 2006, vol.~17.

\bibitem{rains1999quantum}
E.~M. Rains, ``Quantum codes of minimum distance two,'' \emph{IEEE Trans.
  Inform. Theory}, vol.~45, no.~1, pp. 266--271, 1999.

\bibitem{rotteler2004quantum}
M.~R{\"o}tteler, M.~Grassl, and T.~Beth, ``On quantum {MDS} codes,'' in
  \emph{Proc. {IEEE} Int. Symp. Inf. Theory}, Chicago, IL, USA, June 2004, pp.
  356--356.

\bibitem{shor1995scheme}
\BIBentryALTinterwordspacing
P.~W. Shor, ``Scheme for reducing decoherence in quantum computer memory,''
  \emph{Phys. Rev. A}, vol.~52, no.~4, p. R2493, 1995. [Online]. Available:
  \url{http://journals.aps.org/pra/abstract/10.1103/PhysRevA.52.R2493}
\BIBentrySTDinterwordspacing

\bibitem{steane1996error}
\BIBentryALTinterwordspacing
A.~M. Steane, ``Error correcting codes in quantum theory,'' \emph{Phys. Rev.
  Lett.}, vol.~77, no.~5, p. 793, 1996. [Online]. Available:
  \url{http://journals.aps.org/prl/abstract/10.1103/PhysRevLett.77.793}
\BIBentrySTDinterwordspacing

\bibitem{vatan1999spatially}
F.~Vatan, V.~P. Roychowdhury, and M.~Anantram, ``Spatially correlated qubit
  errors and burst-correcting quantum codes,'' \emph{IEEE Trans. Inform.
  Theory}, vol.~45, no.~5, pp. 1703--1708, 1999.

\bibitem{wolf1965codes}
J.~K. Wolf, ``On codes derivable from the tensor product of check matrices,''
  \emph{IEEE Trans. Inform. Theory}, vol.~11, no.~2, pp. 281--284, 1965.

\bibitem{wolf2006introduction}
------, ``An introduction to tensor product codes and applications to digital
  storage systems,'' in \emph{Proc. {IEEE} ITW}, Chengdu, China, Oct. 2006, pp.
  6--10.

\bibitem{yucas2004self}
J.~L. Yucas and G.~L. Mullen, ``Self-reciprocal irreducible polynomials over
  finite fields,'' \emph{Des. Codes Cryptogr.}, vol.~33, no.~3, pp. 275--281,
  2004.

\end{thebibliography}

\end{document}